\numberwithin{equation}{section}
\newtheorem{theorem}{Theorem}[section]
\newtheorem{corollary}{Corollary}[theorem]
\newtheorem{prop}[theorem]{Proposition}
\newcommand{\ket}[1]{\vert #1 \rangle}
\newcommand{\frameworkname}{QCCA}
\def \frameworkname {QCCA}
\author{Anbang Wu, Jingwen Leng, Minyi Guo\\
Shanghai Jiao Tong University	
}
\begin{document}

\begin{abstract}
Existing quantum compiling efforts typically rely on a limited set of two-qubit gates, such as CX, to implement general quantum operations. However, with advancements in quantum hardware, an increasing variety of two-qubit gates are being natively implemented and calibrated on devices, including Fsim gates on Google’s Sycamore processor and partially entangling MS gates on IonQ hardware. Studies have shown that leveraging such native two-qubit gates can enhance the performance of quantum computations. While dedicated compiler optimizations can be developed to exploit the potential of such gates, those efforts may soon become obsolete as more native gates emerge. This highlights the need for a unified framework to efficiently integrate arbitrary two-qubit gates into quantum circuit compilation.
To address this challenge, we propose a compiler framework that enables analytical and computationally-cheap conversion between any two two-qubit operations, and moreover, low-overhead quantum circuits despite varied gate choices. This framework is rooted in the KAK decomposition and an analysis of the Cartan coordinates of two-qubit operations, which reveal the homotopy underlying two-qubit quantum transformations. 
With the proposed framework, any general two-qubit operation, irrespective of its specific matrix details, can serve as the intermediate representation (IR) for circuit compilation. This IR can then be near-optimally transpiled into various native gates using our approach.
We evaluate the effectiveness of the proposed framework in quantum instruction set design.
Compared to the state-of-the-art method, which relies on numerical search, our approach reduces the time overhead of compilation with diverse instruction sets by up to $2.5\times 10^5$ times in tested cases.
\end{abstract}
\title{Design the Quantum Instruction Set with the Cartan Coordinate Analysis Framework}
\maketitle


\section{Introduction}

With the development of quantum hardware, more native entangling gates beyond the well-known CX gate, emerge on mainstream quantum processors, e.g., the fSim gates on Google’s Sycamore processor \cite{Arute2019QuantumSU}, partially entangling MS gates on IonQ hardware \cite{Chen2024benchmarkingtrapped}, $\sqrt[n]{\text{iSWAP}}$ family in SNAIL-based quantum module \cite{McKinney2022CoDesignedAF}.
Existing works \cite{lao2021designing,lin2022let,peterson2022optimal, Chen2023OneGS} demonstrated that using those native gates can improve the fidelity or reduce the latency of quantum circuits. Effective calibration schemes \cite{lao2021designing,lin2022let, Chen2023OneGS} for those native two-qubit gates emerge at the same time, further supporting their usage in practical quantum computing. 

While we can develop specialized compiler optimizations to efficiently exploit such gates in quantum circuits, those efforts may soon become obsolete as novel native gates emerge, considering the variety of qubit technologies.
It is thus natural to ask whether it is possible to come up with a unified framework that can efficiently incorporate arbitrary native gates into quantum circuit compilation and execution. 

Unfortunately, existing efforts\cite{lao2021designing,lin2022let} heavily rely on numerical methods to compile circuits with an extended gate set, leading to tremendous computation cost.
These numerical compilers will first construct a candidate circuit consisting of interested two-qubit gates and adjustable single-qubit gates, and then tune the parameter of single-qubit gates so that the candidate circuit approximates the target one, in terms of matrix distance. More two-qubit gates and tunable single-qubit gates will be added to the candidate circuit if the approximation error is more than desired. Such a numerical search process will become slow if there are many tunable single-qubit gates in the candidate circuit. For example, with the numerical compiler BQSKit \cite{lao2021designing,BQSKIT}, decomposing SWAP gate with the XX interaction gate $XX(\frac{\pi}{28})$ will take about two minutes, let alone to employ this gate in quantum circuits consist of several hundred of varied two-qubit operators. Peterson et al.\cite{peterson2022optimal} propose to find the closest candidate circuit to the target per search step analytically, largely reducing the compilation time. However, their work still does not overcome the path-like nature of the compilation process, and will inevitably incur higher compilation time if more search steps are needed for the target program. Moreover, their work only focuses on XX interaction gates, and cannot handle with arbitrary two-qubit native gates.

One key reason for the expense computational cost of existing works is that they do not fully utilize the information of the native gate set and the target circuit to be compiled. According to the KAK decomposition, the effect of any two-qubit gate can be represented by the rotation over XX, YY, and ZZ axes. Then the conversion/compilation of quantum operations into the native two-qubit gate can be transformed into an angle manipulation problem, which is easier to solve intuitively. For example, three continuous application of the two-qubit gate with a rotation angle $\theta$ should be able to fulfill a two-qubit operation with a rotation angle less than $3\theta$. While this intuition is not always correct, it indicates that such angle information (aka Cartan coordinate) of gates can simplify the quantum circuit compilation.

Based on this insight, we propose a computationally-efficient compiling framework, named \frameworkname,  which enables analytical conversion between any two two-qubit entangling gates $g_1$ and $g_2$ and moreover low-overhead quantum circuits regardless of instruction set selection.
The compilation is based on the idea that a tunable single-qubit rotation gate $g^{1Q}(\theta)$, when `dressed' by two-qubit entangling gates $g^{2Q}$, can become a tunable two-qubit-level rotation gate (e.g., $g^{2Q} g^{1Q}(\theta) g^{2Q}$), allowing continuous deforming of two-qubit gates, when viewing them as rotation angles (by KAK decomposition). Then to compile $g_2$ into $g_1$, we only need \textit{a few constant steps} to determine a global $\theta$ that induces the required deforming from $g_1$ to $g_2$, avoiding local path-finding process in previous works. The resulting compilation time is small (far less than 1ms) and will not be affected by the pair of $g_1$ and $g_2$. 
Also, the resulting gate cost by our approach differs the optimal cost by at most additive constants, which is small and often disappears.
With the proposed framework, we can near-optimally  translate any quantum circuits from one instruction set (e.g., CX+U3 \cite{qiskit2024}) to another emerging instruction set, maintaining the low overhead of quantum circuits.  This empowers using any general two-qubit operation, regardless of its specific matrix details,  as the intermediate representation (IR) for circuit compilation since such IR can now be cheaply  transpiled into various native gates with our approach.

Overall, this work proposes a unified and computational-efficient framework which enables using any two-qubit gates as the basis of quantum circuit compilation. It can help us quickly deploy quantum circuits to a novel quantum platform and fully exploit the potential of its native gates. For a quantum platform which support diverse native gates, our framework can also be used to quickly explore the instruction set design space, rapidly evaluating the effectiveness of different instruction sets in a quantum program context, which is otherwise not possible with numerical compilers. Also, allowing a flexible instruction set in compilation can help suppress errors in quantum circuits. For example, the CX gate's behavior may be distorted by quantum noises. Our framework can directly include this noise-induced CX variant in compilation without incurring heavy computation cost, avoiding errors caused by distorted gate behaviors.

Experimental results demonstrate the effectiveness of the proposed framework in quantum instruction set design.
Compared to the state-of-the-art method, our approach speedup the circuit compilation over diverse instruction sets by up to $2.5\times 10^5$ times, among various test cases. Based on the proposed compiler, we can compute and compare the effectiveness of thousands of different instruction set configurations within seconds, significantly accelerating the quantum instruction set design. 

\section{Background}\label{sect:bg}

In this section, we introduce the basic background knowledge of quantum computation and quantum compilation. For more details, we recommend \cite{nielsen2010quantum, khaneja2000cartan, shende2005synthesis, shende2003minimal, dawson2005solovay, barenco1995elementary, Bergholm2004QuantumCW, de2016block, Vatan2003OptimalQC, Vidal2003UniversalQC, Aho2003CompilingQC, Cybenko2001ReducingQC, Vartiainen2003EfficientDO, Mttnen2004QuantumCF}.

\subsection{Basics of Quantum Computation} 

\textbf{Qubit and quantum state:} The basic information unit of quantum computation is called {\it qubit}. The state of a qubit is a unit vector $\ket{\phi} = \alpha \ket{0} + \beta \ket{1}, \vert \alpha\vert^2+\vert \beta\vert^2 = 1$ in the complex Hilbert space spanned by basis quantum states $\{\ket{0}, \ket{1}\}$. 

\textbf{Quantum gates:} The state of a qubit can be manipulated by single-qubit/1Q quantum gates/operators, e.g., Pauli operators
\begin{equation*}
	Z = \begin{pmatrix}
		1 & 0\\
		0 & -1
	\end{pmatrix}, X = \begin{pmatrix}
		0 & 1\\
		1 & 0
	\end{pmatrix}, Y = \begin{pmatrix}
		0 & -i\\
		i & 0
	\end{pmatrix},
	I = \begin{pmatrix}
		1 & 0\\
		0 & 1
	\end{pmatrix}.
\end{equation*}
Other commonly used gates include the phase gate $S$, the Hadamard gate $H$ and the CX gate
$$ S = \begin{pmatrix}
	1 & 0 \\ 0 & i
\end{pmatrix}, \quad H = \frac{1}{\sqrt{2}}\begin{pmatrix}
	1 & 1 \\ 1 & -1
\end{pmatrix},\quad CX = \begin{pmatrix}
I & \\
& X
\end{pmatrix}.$$ 
Here the CX gate is a two-qubit/2Q gate and can manipulate the quantum state of two qubits simultaneously.

Based on Pauli operators, we can define 1Q rotation gates 
$$Z(\beta) = e^{i\beta Z} = \cos\beta I + i\sin\theta Z,\quad X(\beta) = e^{i\beta X},\ Y(\beta) = e^{i\beta Y}.$$

Likewise, we can define 2Q rotation gates
$$ZZ(\beta) = e^{i\beta Z\otimes Z}, XX(\beta) = e^{i\beta X\otimes X}, YY(\beta) = e^{i\beta Y\otimes Y}.$$

For more details about quantum gates, please refer to \cite{nielsen2010quantum}. 

\textbf{To differentiate gates/operators applied to different qubits, this work uses subscripts}, e.g., $Z_0$ means to apply the Z gate on qubit $q_0$, $H_{0,1}$ means to apply H gates on both $q_0$ and $q_1$.

\subsection{KAK Decomposition and Local Equivalence}

\textbf{Cartan coordinate}: KAK decomposition \cite{khaneja2000cartan} can transform any 2Q special unitary matrix $U \in SU(4)$ into the following canonical form:
\begin{equation}
	U = k_g(a\otimes b) e^{i (\eta_x X_0X_1 + \eta_y Y_0Y_1 + \eta_z Z_0Z_1) } (c\otimes d),
\end{equation}
where the global phase $k_g \in \{1, i\}$, $a, b, c, d \in SU(2)$ (i.e., 1Q gates), $0 \le \vert \eta_z \vert \le \eta_y \le \eta_x \le \frac{\pi}{4} $. If $\eta_x = \frac{\pi}{4}$, $\eta_z \ge 0$. $(\eta_z,\eta_y,\eta_x)$ is called  \textit{Cartan coordinate} of $U$. 

\textbf{Cartan coordinate manipulation}: A straightforward interpretation of $\eta_x \le \frac{\pi}{4}$ is that, if $\eta_x > \frac{\pi}{4}$, we can simply multiply $-iX_0X_1=e^{-i\frac{\pi}{2}X_0X_1}$ to $e^{i\eta_x X_0X_1}$, then the norm of $\eta_x' = \eta_x - \frac{\pi}{2}$ is less than $\frac{\pi}{4}$. As for the negative sign of $\eta_x'$, we can simply cancel it out by dressing XX rotation with Y rotations, e.g., $Y_0e^{i\theta_x' XX}Y_0^\dagger = e^{-i\theta_x' XX}$. Moreover, we can even transform a XX rotation angle into a YY/ZZ rotation one, e.g., $(S\otimes S)e^{i\theta_x XX}(S^\dagger\otimes S^\dagger) = e^{i\theta_x YY}$, $(H\otimes H)e^{i\theta_x XX}(H\otimes H) = e^{i\theta_x ZZ}$.
Those discussions actually present us a tool to manipulate the Cartan coordinate.

\textbf{Local equivalence}: Two-qubit unitary matrices $U_1$ and $U_2$ are called locally equivalent if they share the same Cartan coordinate.
Based on the notion, any two-qubit operations, ignoring the global phase, can be categorized into the following three templates:
\begin{align}
	D_a(\theta_x) &= e^{i \theta_x X_0X_1}, \\
	D_b(\theta_x, \theta_y) &= e^{i (\theta_x X_0X_1 + \theta_y Y_0Y_1)} \\
	D_c(\theta_x, \theta_y, \theta_z) &= e^{i (\theta_x X_0X_1 + \theta_y Y_0Y_1 + \theta_z Z_0Z_1)}
\end{align}

Without loss of generality, we assume $\theta_x, \theta_y \ge 0$.
The template $D_a$ is widely available in various quantum computing platforms, including IBM and Rigetti superconducting devices~\cite{qiskit2024, smith2016practical}. Chen et al.~\cite{Chen2023OneGS} recently proposes a pulse scheme to implement the template $D_b$ on superconducting devices. Google's fSim gate~\cite{Arute2019QuantumSU} and IonQ's partially entangling MS gate are locally equivalent to $D_b$. 

\textbf{Compilation overhead/cost}: The goal of quantum compiling is to turn a quantum circuit/program into a hardware executable one, i.e., synthesizing operations in the circuit into a gate set native to hardware, such as the gate set consisting of the CX gate and all 1Q gates. The native gate set is also called a quantum instruction set. Gates in the  set are termed \textit{basis gates}. For a two-qubit basis gate $g$, and a quantum program $P$, if we need $n$ invocations of $g$ to synthesize all quantum operations in $P$, we say the \textit{decomposition cost} of $P$ with $g$ is $n$.

In this work, we consider compilation with any 2Q gate $D_a, D_b, D_c$, to enable a flexible instruction set in quantum computing.
\section{Challenge and Motivation}

As discussed in the introduction, having a unified compiler framework over diverse instruction sets is critical for unleashing the power of quantum computing.
It can help us quickly deploy quantum circuits to a novel quantum platform and fully exploit the potential of its native gates. For example, the QAOA benchmark consists of ZZ rotation gates of varied parameters. To decompose a ZZ rotation gate, e.g., $ZZ(\frac{\pi}{8})$ with CX, we will need at least two invocations of CX gates 
\begin{equation}\label{equ:zzdecomp}
	ZZ_{0,1}(\frac{\pi}{8}) = CX_{0,1}\ Z_1(\frac{\pi}{8})\ CX_{0,1}.
\end{equation}

Instead, if the instruction set includes $ZZ(\frac{\pi}{8})$, the resulting circuit may have a smaller latency, since from the perspective of pulse engineering, the latency of $ZZ(\frac{\pi}{8})$ is less than a half of that of CX \cite{peterson2022optimal}, let alone two CX are needed in Equation \ref{equ:zzdecomp}. 

Despite this potential, it still remains unclear how to efficiently do compilation with arbitrary two-qubit gates. The numerical compiler \cite{BQSKIT} may exhibit a heavy compilation cost. Assume we need $n$ invocations of $g_1$ to compile the operation $g_2$, numerical compilers will take at least $n$ search steps and $36n^2$ time. Ideally, we need at least $O(n)$ compilation time which is no more complex than sequentially appending $n$ occurrence of $g_1$ to the circuit.

\begin{center}
	\textit{The challenge is that, whether is it possible to reduce the compilation complexity to constant computation steps and time?}
\end{center}

\begin{figure}[t]
	\centering
	\includegraphics{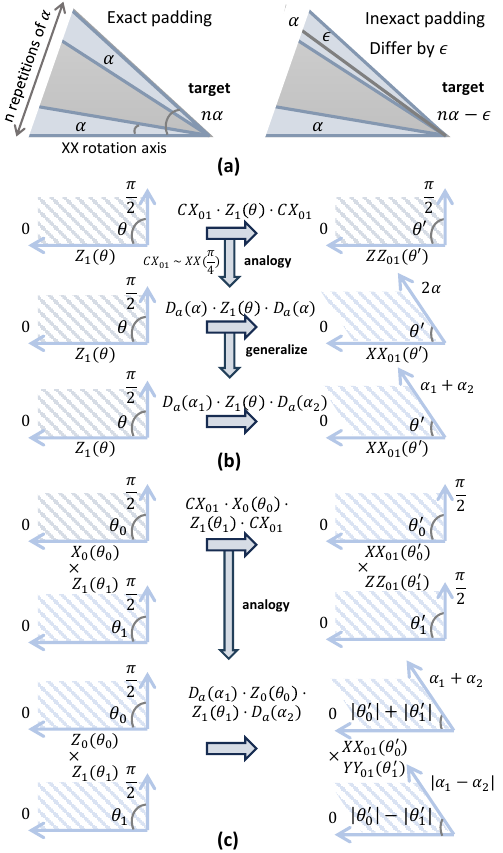}
	\caption{A motivation example of compilation with $D_a$. (a) Intuitive compilation by rotation angle addition. (b) Compiling residual rotation $XX(\epsilon)$ with $D_a$ by mimicking CX-based synthesis. Note that $2\alpha$ or $\alpha_1+\alpha_2$ larger than $\epsilon$. (c) Converting $D_a$ from only XX rotation to XX+YY hybrid rotations, to handle diverse residual rotations (axes).}\label{fig:mot}
\end{figure}

\textbf{Our insight} to tackle this challenge is that, we can use KAK decomposition to project two-qubit operations into rotation angles, and turn compilation into the more intuitive angle manipulation. To be concrete, let us consider the relatively simple gate synthesis with $D_a$. assuming having $D_a(\alpha)$ in the instruction set, let us consider the example to compile a gate $g_1$ locally equivalent to $XX(n\alpha)$.

\textbf{Rotation padding}: As shown in the left panel of Figure \ref{fig:mot}(a), we only need to repeat $D_a(\alpha)$ n times and plus 1Q gates in the KAK decomposition of $g_1$ to finish the compilation. In this process, the only computation is induced by $n = \lceil n\alpha/\alpha \rceil$. 

Unfortunately, as shown in Figure \ref{fig:mot}(a) right panel, the rotation padding may not always be exact and we may have a residual rotation $XX(\epsilon)$, $\epsilon < \alpha$, ignoring the negative sign of $\epsilon$ which can be easily removed by the Cartan coordinate manipulation discussed in Section \ref{sect:bg}.

\textbf{Residual synthesizing}: We observe that, the CX gate is locally equivalent to $XX(\frac{\pi}{4})$ but is able to produce any tunable 2Q ZZ interaction between $[0, \frac{\pi}{2}]$, when accompanied with tunable single-qubit rotation gate, as shown in Equation \ref{equ:zzdecomp} and Figure \ref{fig:mot}(b). It is thus natural to deduce that, by replacing $CX$ in Equation \ref{equ:zzdecomp} with $D_a$, we can use the combination $D_a(\alpha)\ Z_1(\theta)\ D_a(\alpha)$ to implement the residual rotation $XX(\epsilon)$. This analogy is indeed true and will be elaborated in later sections. 
As shown in Figure \ref{fig:mot}(b), in this compilation process, the only computation is resulted by computing the $\theta$ which leads to $XX(\epsilon)$. We may reuse the numerical compiler to find $\theta$ and in this case one search step is sufficient. In our work, we further reduce the computation cost by instead proposing analytical solutions of $\theta$. 
Overall, the compilation time for residual rotation is independent of $g_1$ and $D_a$ and is a constant factor.

\textbf{Hybrid synthesizing}: Now, let us consider a slightly more complicated example, which is to compile $g_2$ locally equivalent to $XX(n_1\alpha_1-\epsilon_1)\cdot YY(n_2\alpha_2-\epsilon_2)$. As discussed in Section \ref{sect:bg}, we can convert the XX rotation $D_a$ to YY/ZZ rotations. Therefore, after the rotation padding step, we need to compile a hybrid residual XX+YY rotation  gate. Intuitively, we can tackle this challenge as in Figure \ref{fig:mot}(c): we can produce hybrid rotation from $D_a$ by mimicking the CX-based synthesis. This technique together with the previous two is sufficient to enable universal compilation with $D_a$. Likewise, the computation time for this procedure is also a constant factor. 

Overall, from the simple example of gate synthesis by $D_a$ (see Figure \ref{fig:mot}), we can envision a promising strategy to enable constant computation cost for compilation with diverse 2Q gates: first applying rotation padding to reduce the scale of operations to be compiled, and then synthesizing the remaining small-scale residual rotations. Based on this idea, we further propose efficient compilation with $D_b$ and $D_c$. In those cases, we cannot simply mimicking the behavior of CX-based synthesis. Fortunately, we discover that, from the perspective of Lie algebra, we can  generalize techniques in Figure \ref{fig:mot} to work with $D_b$ and $D_c$. The flexibility of the proposed strategy, i.e., rotation padding + residual synthesizing, even enables us to do hybrid compilation with $D_a, D_b, D_c$ simultaneously.

Moreover, we demonstrate the optimality of the proposed strategy by establishing a \textbf{triangle inequality} on Cartan coordinates. To some extent, we show that the entangling power of a two-qubit operation is proportional to the L1 norm of its Cartan coordinate, making its decomposition cost no less than the estimation by rotation padding. This indicates that, our compiler differs from the optimal one by at most an additive factor induced by residual rotation compilation, as will be elaborated in later sections.

Comparing to the state-of-the-art, the proposed strategy avoids the heavy computation cost of numerical compilers \cite{BQSKIT, lao2021designing, lin2022let}.
Peterson et al. \cite{peterson2020fixed} proposed a similar formulae to the one shown in Figure \ref{fig:mot}(c). However, their work does not consider rotation padding, leading to a path-finding-based expense compilation, similar to numerical compilers. Moreover, their work does not support the compilation with $D_b$ and $D_c$.
\section{The \frameworkname{} Compiler Framework}\label{sec:theory}

In this section, we first analyze Cartan coordinate transformations, establish analytical relations beyond CX-based ones that will turn a tunable 1Q gate to a tunable 2Q gate. We then
extend techniques in Figure \ref{fig:mot} to enable compilation with all $D_a$, $D_b$ and $D_c$. Finally, we provides optimality analysis of the proposed compiler framework.

\subsection{Cartan Coordinate Analysis}

Firstly, by mimicking the analytical construction in Equation \ref{equ:zzdecomp}, 
\begin{prop}\label{prop:datransform} For $\theta_x, \theta_x' > 0$, we can find $\eta, \tau$, s.t.
	\begin{align} \label{equ:daz}
		e^{i\eta X_0X_1} &= Z_1(\tau) D_a(\theta_x) Z_1(\beta-\pi/2) D_a(\theta_x') Z_1(\tau), \\
		&= Z_1(\tau) D_a(\theta_x) Z_1(\beta) D_a^\dagger(\theta_x') Z_1(\tau-\pi/2),
	\end{align}
	where $\sin\eta = \sin\beta\sin \theta_x+\theta_x'$, $\tan 2\tau = \frac{\cos\beta}{\sin\beta\cos \theta_x+\theta_x'}$.
\end{prop}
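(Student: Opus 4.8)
\emph{Proof plan.} The driving observation is that the whole identity lives inside a four–dimensional associative subalgebra on which everything behaves like single–qubit $SU(2)$ algebra. Set $A=X_0X_1$, $B=X_0Y_1$, $C=Z_1$. One checks directly that $A^2=B^2=C^2=I$, that the three operators pairwise anticommute, and that $AB=iC$, $BC=iA$, $CA=iB$; thus $\{A,B,C\}$ satisfies exactly the Pauli relations of $\{X,Y,Z\}$ and $\mathrm{span}\{I,A,B,C\}$ is closed under multiplication. Every factor on the right–hand side is the exponential of a real multiple of $A$ (the copies of $D_a$ and $D_a^\dagger$, and the target $e^{i\eta X_0X_1}$) or of $C$ (the $Z_1(\cdot)$ rotations), so under the identification $A\leftrightarrow X$, $B\leftrightarrow Y$, $C\leftrightarrow Z$ the claim becomes an ordinary $SU(2)$ statement — a product of $X$– and $Z$–axis rotations equals a pure $X$–axis rotation — i.e. a $ZXZXZ$ Euler–angle reduction; only pinning the parameters remains.

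The computation I would carry out is: first expand the inner sandwich $W:=D_a(\theta_x)\,Z_1(\beta-\pi/2)\,D_a(\theta_x')$ using $e^{i\phi P}=\cos\phi\,I+i\sin\phi\,P$ together with the multiplication rules above, then collapse the trigonometric products via sum–to–product identities, obtaining $W=w_0 I+i(w_A A+w_B B+w_C C)$ with $w_0=\sin\beta\cos(\theta_x+\theta_x')$, $w_A=\sin\beta\sin(\theta_x+\theta_x')$, $w_B=-\cos\beta\sin(\theta_x-\theta_x')$ and $w_C=-\cos\beta\cos(\theta_x-\theta_x')$ (the $-\pi/2$ shift simply turns the cosine and sine of the middle angle into $\sin\beta$ and $-\cos\beta$). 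Then dress $W$ by the outer $Z_1(\tau)$'s: since $Z_1(\tau)$ commutes with $C$ and, applied with the same sign on each side, leaves the $A$– and $B$–coefficients untouched while rotating the $(I,C)$ pair by $2\tau$, the product $Z_1(\tau)\,W\,Z_1(\tau)$ has $A$–coefficient $w_A$, $B$–coefficient $w_B$, $I$–coefficient $w_0\cos2\tau-w_C\sin2\tau$, and $C$–coefficient $w_0\sin2\tau+w_C\cos2\tau$. Setting the last to zero fixes $\tan2\tau=-w_C/w_0$, and what remains is $\cos\eta\,I+i\sin\eta\,A$ with $\sin\eta=w_A=\sin\beta\sin(\theta_x+\theta_x')$ — provided $w_B$ also vanishes.

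That last proviso is where the only genuine subtlety sits: $w_B=-\cos\beta\sin(\theta_x-\theta_x')$ is identically zero exactly in the symmetric case $\theta_x=\theta_x'$ — the one used in residual synthesis, $D_a(\alpha)\,Z_1(\theta)\,D_a(\alpha)$ — in which $\tan2\tau$ collapses to $\cos\beta/(\sin\beta\cos(\theta_x+\theta_x'))$ and the stated formulas fall out; for unequal angles one instead lets the two outer $Z_1$ rotations differ, and a short Euler–angle argument (or the symmetry $W=W^{\mathsf{T}}$, which forces the two outer angles to coincide once $w_B=0$) produces them. The second displayed line is then not independent: starting from $Z_1(\pi/2)\,D_a(\theta_x')\,Z_1(-\pi/2)=D_a^\dagger(\theta_x')$ — conjugating $A$ by $Z_1(\pi/2)$ negates it — substitute into the first identity and merge the $\pm\pi/2$ shifts into the adjacent $Z_1(\cdot)$ factors, pushing the leftover $-1$ into the global phase. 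So the main obstacle is not any single hard step but the joint clearing of the two off–axis components ($C$ via the choice of $\tau$, $B$ via the inner structure) together with careful tracking of the $\{1,i\}$ global phase; everything else is bookkeeping inside the Pauli subalgebra.
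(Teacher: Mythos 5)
Your proposal is correct and follows essentially the same route as the paper: the paper's own justification is exactly the observation that the right-hand side lies in $\exp(V)$ for the rank-one Cartan subalgebra $V=i\,\mathrm{span}\{X_0X_1,\,X_0Y_1,\,I_0Z_1\}$, which satisfies the Pauli relations, so the identity reduces to a $ZXZ$ Euler-angle decomposition inside a copy of $SU(2)$ --- the computation you then carry out explicitly. Your additional observation --- that with \emph{equal} outer angles $\tau$ the $X_0Y_1$ component cancels only when $\theta_x=\theta_x'$ (precisely the symmetric form $D_a(\alpha)\,Z_1(\theta)\,D_a(\alpha)$ used in residual synthesis), while unequal $\theta_x,\theta_x'$ force two distinct outer $Z_1$ angles --- is a correct and useful refinement of the proposition as stated.
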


Different from Equation \ref{equ:zzdecomp}, the construction with $D_a$ requires additional 1Q gates $Z_1(\tau)$, called \textbf{calibration operators}.
To extend such construction to $D_b$, we need a new perspective to interpret Equation \ref{equ:daz}, allowing an appropriate way to replace $D_a$ with $D_b$.

Indeed, from the perspective of Lie algebra theory, the right hand side (RHS) term of Equation~\ref{equ:daz} is in the space $exp(V)$, where $V$ is the Cartan algebra $i\{X_0X_1,X_0Y_1, I_0Z_1\}$. Then according to the Cartan decomposition theorem~\cite{khaneja2000cartan}, 
$$exp(V) = exp(i\{I_0Z_1\}) exp(i\{X_0X_1\}) exp(i\{I_0Z_1\}).$$
This formula is just the restatement of Equation~\ref{equ:daz} since $exp(i\{IZ\})$ is exactly the Z-rotation gate on $q_1$.  Replacing $D_a$ with $D_b$, we then have $V = i\{X_0X_1, Y_0Y_1, X_0Y_1, Y_0X_1, I_0Z_1, Z_0I_1\}$ and

\begin{prop}\label{prop:dbtransform} For $\theta_x, \theta_y, \theta_x', \theta_y'$, $\theta_x \ge \theta_y > 0$, $\theta_x' \ge \theta_y' > 0$,
	\begin{align} \label{equ:dby}
		\begin{aligned}
			e^{i(\eta_x X_0X_1 + \eta_y Y_0Y_1)} = Z_0(\tau_0) Z_1(\tau_1) D_b(\theta_x,\theta_y)\cdot \\Z_0(\beta_0) Z_1(\beta_1)\cdot \\ D_b(\theta_x',\theta_y')  Z_0(\tau_2) Z_1(\tau_3),
		\end{aligned}
	\end{align}
	where $\eta_x, \eta_y$ satisfy
	\begin{align}
		\begin{aligned}
			\cos^2(\eta_x \pm \eta_y) = \cos^2(\beta_0\mp\beta_1)\cos^2(\theta_x\pm\theta_y+\theta_x'\pm\theta_y') \\+ \sin^2(\beta_0\mp\beta_1)\cos^2(\theta_x\pm\theta_y-\theta_x'\mp\theta_y'). 
		\end{aligned}
	\end{align}
	$\tau_0, \tau_1, \tau_2, \tau_3$ can be analytically determined by comparing the matrix element of LHS and RHS. Let $a_0=\tau_0+\tau_2, a_1 = \tau_1 + \tau_3, a_2 = \tau_0-\tau_2, a_3 = \tau_1-\tau_3, b_0 = \theta_x + \theta_x', b_1 = \theta_y+\theta_y', b_2 = \theta_x - \theta_x', b_3 = \theta_y-\theta_y'$, then
	\begin{align}
		\cos(\eta_x-\eta_y)\cos(a_0 + a_1) &= \cos(\beta_0+\beta_1) \cos(b_0 - b_1) \\
		\cos(\eta_x+\eta_y)\cos(a_0 - a_1) &= \cos(\beta_0-\beta_1) \cos(b_0 + b_1) \\
		\cos(\eta_x-\eta_y)\sin(a_0 + a_1) &= \sin(\beta_0+\beta_1) \cos(b_2 - b_3) \\
		\cos(\eta_x+\eta_y)\sin(a_0 - a_1) &= \sin(\beta_0-\beta_1) \cos(b_2 + b_3) \\
		\sin (\eta_x + \eta_y)\cos (a_2-a_3) &= \cos(\beta_0 - \beta_1)\sin (b_0+b_1) \\
		\sin(\eta_x-\eta_y)\cos (a_2+a_3) &= \cos(\beta_0 + \beta_1)\sin(b_0 - b_1) \\
		\sin(\eta_x-\eta_y)\sin(a_2+a_3) &= -\sin(\beta_0+\beta_1)\sin(b_2-b_3) \\
		\sin(\eta_x+\eta_y)\sin(a_2-a_3) &= - \sin (\beta_0 - \beta_1)\sin(b_2+b_3) 
	\end{align}
\end{prop}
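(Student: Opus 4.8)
The plan is to exploit a hidden block structure. Every operator occurring in Equation~\ref{equ:dby}---$X_0X_1$, $Y_0Y_1$, $Z_0 = Z_0 I_1$, $Z_1 = I_0 Z_1$, and therefore $D_b(\cdot,\cdot)$, the $Z$-rotations, and the target $e^{i(\eta_x X_0X_1 + \eta_y Y_0Y_1)}$---preserves the two subspaces $\mathcal H_A = \mathrm{span}\{\ket{00},\ket{11}\}$ and $\mathcal H_B = \mathrm{span}\{\ket{01},\ket{10}\}$; this is exactly the remark, made just before the statement, that $\exp V$ with $V = i\{X_0X_1,Y_0Y_1,X_0Y_1,Y_0X_1,I_0Z_1,Z_0I_1\}$ lies inside a block-diagonal $SU(2)\times SU(2)$. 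First I would record the restrictions: on $\mathcal H_A$ one has $X_0X_1\mapsto\sigma_x$, $Y_0Y_1\mapsto-\sigma_x$, $Z_0\mapsto\sigma_z$, $Z_1\mapsto\sigma_z$; on $\mathcal H_B$ one has $X_0X_1\mapsto\sigma_x$, $Y_0Y_1\mapsto\sigma_x$, $Z_0\mapsto\sigma_z$, $Z_1\mapsto-\sigma_z$. Hence $D_b(\theta_x,\theta_y)$ restricts to $e^{i(\theta_x-\theta_y)\sigma_x}$ on $\mathcal H_A$ and to $e^{i(\theta_x+\theta_y)\sigma_x}$ on $\mathcal H_B$; $Z_0(\beta_0)Z_1(\beta_1)$ restricts to $e^{i(\beta_0+\beta_1)\sigma_z}$ and $e^{i(\beta_0-\beta_1)\sigma_z}$; and the target restricts to $e^{i(\eta_x-\eta_y)\sigma_x}$ and $e^{i(\eta_x+\eta_y)\sigma_x}$.

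With these substitutions, Equation~\ref{equ:dby} splits into a pair of independent $SU(2)$ identities of the shape
\[
e^{i s_1\sigma_z}\,e^{i t_1\sigma_x}\,e^{i s_2\sigma_z}\,e^{i t_2\sigma_x}\,e^{i s_3\sigma_z} = e^{i w\sigma_x},
\]
one per block, where the inner angles $t_1,t_2 \in \{\theta_x\mp\theta_y,\ \theta_x'\mp\theta_y'\}$ and the middle angle $s_2 \in \{\beta_0\pm\beta_1\}$ are given, the outer angles $s_1,s_3 \in \{\tau_0\pm\tau_1,\ \tau_2\pm\tau_3\}$ are the unknowns, and $w \in \{\eta_x\mp\eta_y\}$. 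Since $(\tau_0,\tau_1)\mapsto(\tau_0+\tau_1,\tau_0-\tau_1)$ and $(\tau_2,\tau_3)\mapsto(\tau_2+\tau_3,\tau_2-\tau_3)$ are invertible, solving the two blocks separately for their outer angles recovers a unique quadruple $(\tau_0,\tau_1,\tau_2,\tau_3)$.

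To solve one such equation I would set $N = e^{it_1\sigma_x}e^{is_2\sigma_z}e^{it_2\sigma_x}$, an $SU(2)$ matrix with top row $(n_{11},\,n_{12})$ where $n_{11} = \cos t_1\cos t_2\,e^{is_2} - \sin t_1\sin t_2\,e^{-is_2}$ and $n_{12} = i(\cos t_1\sin t_2\,e^{is_2} + \sin t_1\cos t_2\,e^{-is_2})$, and note that $e^{is_1\sigma_z}Ne^{is_3\sigma_z}$ has $(1,1)$ entry $e^{i(s_1+s_3)}n_{11}$ and $(1,2)$ entry $e^{i(s_1-s_3)}n_{12}$. Requiring this product to equal the pure $X$-rotation $\cos w\,I + i\sin w\,\sigma_x$ pins down $s_1+s_3$ and $s_1-s_3$ (hence $s_1,s_3$) modulo $\pi$ from the phases of $n_{11}$ and $n_{12}$, forces $\cos^2 w = |n_{11}|^2$, and makes the remaining two entries match automatically by unitarity $|n_{11}|^2+|n_{12}|^2 = 1$; existence is never an issue because this is just the $ZXZ$ Euler decomposition of $N$.

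It then remains to expand. A product-to-sum simplification gives $|n_{11}|^2 = \cos^2 s_2\cos^2(t_1+t_2) + \sin^2 s_2\cos^2(t_1-t_2)$, which under the substitutions $b_0 = \theta_x+\theta_x'$, $b_1 = \theta_y+\theta_y'$, $b_2 = \theta_x-\theta_x'$, $b_3 = \theta_y-\theta_y'$ becomes exactly the displayed formula for $\cos^2(\eta_x\pm\eta_y)$; and taking real and imaginary parts of the $(1,1)$- and $(1,2)$-entry equations on each block, rewritten with $a_0 = \tau_0+\tau_2$, $a_1 = \tau_1+\tau_3$, $a_2 = \tau_0-\tau_2$, $a_3 = \tau_1-\tau_3$, reproduces the eight displayed relations (up to the sign conventions implicit in the $\tau_i$). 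The main obstacle is really the first step: recognizing the block decomposition---equivalently, that the whole identity lives in $\exp V\cong SU(2)\times SU(2)$---which collapses the four-parameter matrix equation into two elementary $SU(2)$ Euler-angle problems; after that it is routine $2\times2$ trigonometry. A secondary bookkeeping point is that the hypotheses $\theta_x\ge\theta_y>0$ and $\theta_x'\ge\theta_y'>0$ keep $t_1 = \theta_x-\theta_y$ and $t_2 = \theta_x'-\theta_y'$ nonnegative and confine the block angles to a canonical range, so the two $\cos^2$ equations resolve into a single Cartan coordinate with $\eta_x\ge\eta_y$; this is sign- and branch-tracking rather than a genuine difficulty.
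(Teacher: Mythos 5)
Your proposal is correct, and it is in fact more complete than what the paper itself offers: the paper only remarks that the right-hand side of Equation~\ref{equ:dby} lies in $\exp(V)$ for the Cartan algebra $V=i\{X_0X_1,Y_0Y_1,X_0Y_1,Y_0X_1,I_0Z_1,Z_0I_1\}$, invokes the abstract Cartan decomposition $\exp(V)=\exp(i\{I_0Z_1,Z_0I_1\})\exp(i\{X_0X_1,Y_0Y_1\})\exp(i\{I_0Z_1,Z_0I_1\})$ for existence, and asserts that the angle relations follow by ``comparing matrix elements.'' Your block-diagonal reduction is precisely the concrete realization of that statement --- the isomorphism $\exp(V)\cong SU(2)\times SU(2)$ on $\mathrm{span}\{\ket{00},\ket{11}\}\oplus\mathrm{span}\{\ket{01},\ket{10}\}$ turns the Cartan decomposition into two independent $ZXZ$ Euler decompositions --- and your restrictions of $X_0X_1$, $Y_0Y_1$, $Z_0$, $Z_1$ to the two blocks, the resulting angle bookkeeping ($t_1=\theta_x\mp\theta_y$, $s_2=\beta_0\pm\beta_1$, $w=\eta_x\mp\eta_y$), and the identity $\vert n_{11}\vert^2=\cos^2 s_2\cos^2(t_1+t_2)+\sin^2 s_2\cos^2(t_1-t_2)$ all check out and reproduce the displayed $\cos^2(\eta_x\pm\eta_y)$ formula exactly. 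What your route buys is an actual existence proof (Euler decomposition of each $2\times 2$ block always exists) plus a derivation of the eight $\tau$-relations rather than a bare assertion; what it costs is nothing beyond elementary trigonometry. The one point to tidy up is the sign bookkeeping you already flag: with the paper's convention $Z(\tau)=e^{i\tau Z}$, matching the $(1,1)$ and $(1,2)$ entries gives $n_{11}=\cos w\,e^{-i(s_1+s_3)}$ and $n_{12}=i\sin w\,e^{-i(s_1-s_3)}$, which flips the sign of some $\sin(a_0\pm a_1)$ and $\sin(a_2\pm a_3)$ terms relative to the printed relations; this affects only the orientation of the $\tau_i$ (equivalently, replacing $\tau_i\to-\tau_i$), not the validity of the construction.
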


Proposition~\ref{prop:datransform} and \ref{prop:dbtransform} reveal how to transform a 2q gate with a fixed Cartan coordinate into a 2q gate with parameterized Cartan coordinate. Those two propositions can be used to generate formulas for other transformations based on $D_a, D_b, D_c$. For example, we can also consider the Cartan algebra $V = i\{XX, YY, XI, IY, ZY, XZ\}$, which leads to another construction with $D_b$, which is equivalent to apply Proposition~\ref{prop:datransform} twice:
$$e^{i(\eta_x X_0X_1 + \eta_y Y_0Y_1)} = (e^{i\eta_x X_0X_1})(e^{i\eta_y Y_0Y_1}) = (c_1^Y D_b Y_1(\beta_x) D_b^\dagger c_1^Y) (c_0^X D_b X_0(\beta_y) D_b^\dagger c_0^X)
= c_1^Yc_0^X D_b Y_1(\beta_x) X_0(\beta_y) D_b^\dagger c_1^Yc_0^X$$
where $ c_0^Y, c_1^Y,  c_0^X, c_1^X$ are calibration operators determined by applying Proposition~\ref{prop:datransform}.

Further, Figure \ref{fig:mot}(c) is a special case of Proposition \ref{prop:dbtransform} where $\theta_y=0, \theta_y' = 0$. Also, the LHS of Equation \ref{equ:daz} and \ref{equ:dby} can be transformed into other rotation axes, as discussed in Section \ref{sect:bg}. 

\begin{corollary}\label{cor:rules} We observe the following rules for parameterized Cartan coordinate transformation:
	\begin{enumerate}
		\item (Pauli conversion) By multiplying $H_1 S_1 H_1 H_0 S_0 H_0$ and its inverse to the left and right side of the LHS and RHS term respectively in Equation~\ref{equ:daz}, we see $D_a Y(\beta) D_a^\dagger$ is locally equivalent to $e^{i\eta X_0X_1}$ up to some local Y-rotation gates.
		\item (Reduce) $D_a Z(\beta) D_a = D_a Z(\beta-\frac{\pi}{2}) D_a^\dagger Z(\frac{\pi}{2})$, $D_a Y(\beta) D_a = D_a Y(\beta-\frac{\pi}{2}) D_a^\dagger Y(\frac{\pi}{2})$, $D_b Z(\beta) D_b = D_b Z(\beta-\frac{\pi}{2}) D_b^\dagger Z(\frac{\pi}{2})$;
		\item (Downgrade I) $D_b Y(\theta) D_b^\dagger = D_a Y(\theta) D_a^\dagger$ (assuming the same $\theta_x$), $D_c Z(\theta) D_c^\dagger = D_b Z(\theta) D_b^\dagger$;
		\item (Downgrade II) $D_b Y(\theta) D_b = e^{i2\theta_y Y_0Y_1} D_a Y(\theta) D_a$, and then Rule 2;
		If $D_c^\dagger$ is not available, $D_c Z(\beta) D_c = e^{i2\theta_z Z_0Z_1} D_b Z(\theta) D_b$, and then reuse Rule 2. 
	\end{enumerate}
\end{corollary}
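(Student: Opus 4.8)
The plan is to verify the four rules one at a time; once the relevant two‑qubit unitary is written as a product of commuting exponentials, each reduces to elementary Pauli algebra. Two facts are used throughout: rotations about a fixed Pauli axis compose additively, $R(\beta_1)R(\beta_2)=R(\beta_1+\beta_2)$; and for anticommuting $A,B$ with $A^2=I$ one has $e^{i\phi A}Be^{-i\phi A}=\cos(2\phi)B+i\sin(2\phi)AB$. Two specializations do most of the work. First, $R(\pi/2)$ equals $i$ times the Pauli $R$, so conjugation by it sends the generators $X_0X_1$, $Y_0Y_1$ (and $Z_0Z_1$) to their negatives whenever $R\in\{Y,Z\}$, hence $R(\pi/2)\,D\,R(\pi/2)^\dagger=D^\dagger$ for the $D$ in question. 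Second, the three Cartan generators $X_0X_1$, $Y_0Y_1$, $Z_0Z_1$ pairwise commute, and $P_0P_1$ additionally commutes with the single‑qubit rotation $P(\beta)$ on either qubit; consequently $D_b=D_a(\theta_x)e^{i\theta_y Y_0Y_1}=e^{i\theta_y Y_0Y_1}D_a(\theta_x)$ and $D_c=D_b(\theta_x,\theta_y)e^{i\theta_z Z_0Z_1}=e^{i\theta_z Z_0Z_1}D_b(\theta_x,\theta_y)$.

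For Rule~2 (Reduce), write $R(\beta)=R(\beta-\pi/2)R(\pi/2)$, push $R(\pi/2)$ through the neighbouring $D$ using $R(\pi/2)\,D\,R(\pi/2)^\dagger=D^\dagger$ (legitimate because the Pauli $R$ anticommutes with every Pauli factor appearing in the generator of $D$), and leave the surviving $R(\pi/2)$ on the outside. For Rule~3 (Downgrade~I), factor $D_b=D_a(\theta_x)e^{i\theta_y Y_0Y_1}$ and $D_b^\dagger=e^{-i\theta_y Y_0Y_1}D_a(\theta_x)^\dagger$; since $Y_0Y_1$ commutes with the single‑qubit $Y$ gate, the inner $e^{\pm i\theta_y Y_0Y_1}$ cancel in $D_bY(\theta)D_b^\dagger$, leaving $D_aY(\theta)D_a^\dagger$, and the $D_c$‑versus‑$D_b$ statement is identical using $[Z_0Z_1,X_0X_1]=[Z_0Z_1,Y_0Y_1]=[Z_0Z_1,Z_q]=0$. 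For Rule~4 (Downgrade~II), with the same factorization, in $D_bY(\theta)D_b$ the two copies of $e^{i\theta_y Y_0Y_1}$ commute past $Y(\theta)$ and past $D_a(\theta_x)$ and combine into $e^{i2\theta_y Y_0Y_1}$, which then commutes to the front, giving $e^{i2\theta_y Y_0Y_1}D_aY(\theta)D_a$; the $D_c$ statement is the same with $Z_0Z_1$ in place of $Y_0Y_1$; applying Rule~2 to the residual $D_aY(\theta)D_a$ (resp.\ $D_bZ(\theta)D_b$) produces the stated $D\cdots D^\dagger$ form.

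For Rule~1 (Pauli conversion), I would first identify $HSH$ with a global‑phase multiple of the rotation $X(-\pi/4)$: it commutes with $X$ and, by the conjugation identity above, sends $Z\mapsto -Y$. Conjugating both sides of Equation~(\ref{equ:daz}), taken with $\theta_x'=\theta_x$, by $(HSH)_0(HSH)_1$ then fixes $e^{i\eta X_0X_1}$ and each $D_a(\theta_x)$ factor while turning every $Z_1$ calibration operator into a $Y_1$ rotation; applying Rule~2 to the resulting $D_a\,Y_1(\cdot)\,D_a$ sandwich and absorbing the two outermost $Y_1$ rotations as local gates exhibits $D_aY(\beta)D_a^\dagger$ as locally equivalent to $e^{i\eta X_0X_1}$ up to local $Y$-rotations.

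I expect no substantive obstacle. The only step needing care is Rule~1: one must conjugate the correct form of Proposition~\ref{prop:datransform} so that a genuine $D_a\cdots D_a^\dagger$ pattern emerges after Rule~2, and track signs and qubit labels as the $HSH$ conjugation is pushed through each single‑qubit factor. The remaining rules are pure bookkeeping with commuting exponentials and the two elementary identities above.
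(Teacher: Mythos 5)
Your verification is correct, and it follows the same elementary reasoning the paper leaves implicit: the corollary is stated as a set of observed rules with no written proof beyond the remark that $e^{i2\theta_y Y_0Y_1}$ commutes with $Y$-rotations, and your use of $R(\pi/2)=iR$, the anticommutation/conjugation identity, and the commuting factorizations $D_b=e^{i\theta_y Y_0Y_1}D_a$, $D_c=e^{i\theta_z Z_0Z_1}D_b$ supplies exactly the missing computations (including the correct caveat that the Reduce rule needs the single-qubit Pauli to anticommute with every factor of the generator, which is why no $D_c Z(\beta)D_c$ version is claimed). No gaps.
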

Note that, since $e^{i2\theta_y Y_0Y_1}$ is commutable with Y-rotation gates, $e^{i2\theta_y Y_0Y_1} D_a Y(\theta) D_a$ is locally equivalent to $e^{i2\theta_y Y_0Y_1}e^{i\eta X_0X_1}$ by Proposition~\ref{prop:datransform}. The case for $e^{i2\theta_z Z_0Z_1} D_b Z(\theta) D_b$ is similar.

\subsection{Compilation with Arbitrary 2Q Gate}

Theoretically, there is unlimited ways to apply Proposition \ref{prop:datransform}, \ref{prop:dbtransform} in circuit compilation process. In our compiler design, we choose to utilize those constructions \textbf{globally}, that is, once the operation to be compiled is known, our compiler directly produces the compiled circuit without using the path searching in state-of-the-art works~\cite{lao2021designing,peterson2022optimal,lin2022let}.

The basic idea follows Figure \ref{fig:mot}.

\subsubsection{Optimized Compilation with $D_a(\theta_x)$}\label{sect:da}
To compile a general $u$ that is locally equivalent to $XX(\eta_x)YY(\eta_y)ZZ(\eta_z)$, we need $m = \lfloor \frac{\eta_x}{\theta_x} \rfloor + \lfloor \frac{\eta_y}{\theta_x} \rfloor + \lfloor \frac{\eta_z}{\theta_x} \rfloor$ invocations of $D_a$ for \textbf{rotation padding}. 

Let the padded rotation $u_p = XX(\lfloor\frac{\eta_x}{\theta_x} \rfloor*\theta_x)\cdot YY(\lfloor\frac{\eta_y}{\theta_x} \rfloor*\theta_x)\cdot ZZ(\lfloor\frac{\eta_z}{\theta_x} \rfloor*\theta_x)$, we then consider the compilation of the \textbf{residual rotation} $u_r = XX(\gamma_x)YY(\gamma_y)ZZ(\gamma_z)$ that satisfies $\max(\vert\gamma_x \vert, \vert\gamma_y\vert, \vert\gamma_z\vert) < \theta_x$. We need at most three $D_a$ for this target. Without loss of generality, we assume $ \gamma_x \ge \gamma_y \ge \vert\gamma_z \vert \ge 0$. Then, after the first application of Proposition \ref{prop:dbtransform} by $D_a Z_0(\beta_0)Z_1(\beta_1)D_a$ (the leading $D_a$ is from $u_p$), the residual rotation will now become $(0, \gamma_y-(\theta_x-\gamma_x), \gamma_z)$. Then, one more application of Proposition \ref{prop:dbtransform} by  $H_{0,1}D_aH_{0,1} X_0(\beta_0)X_1(\beta_1)H_{0,1}D_aH_{0,1}$, with the leading $H_{0,1}D_aH_{0,1}$  from $u_p$, is enough for compilation. Note here we convert $D_a$ from XX rotation to ZZ rotation by $H_{0,1}$.

Finally, we obtain the compilation of $XX(\eta_x)YY(\eta_y)ZZ(\eta_z)$. To reach $u$, we can use KAK decomposition to find the 1Q gates that differ $u$ from $XX(\eta_x)YY(\eta_y)ZZ(\eta_z)$.

\subsubsection{Optimized Compilation with $D_b(\theta_x, \theta_y)$}\label{sect:db}
The basic compiling flow with $D_b$ follows the one with $D_a$. For the compilation target $u$ locally equivalent to $XX(\eta_x)YY(\eta_y)ZZ(\eta_z)$, we now need two steps for rotation padding. As an example,  assuming $\lfloor\frac{\eta_x}{\theta_x} \rfloor > \lfloor\frac{\eta_y}{\theta_y} \rfloor$, then in the first step, we need $\min(\lfloor\frac{\eta_x}{\theta_x} \rfloor, \lfloor\frac{\eta_y}{\theta_y} \rfloor)$ $D_b$ for padding. In the second step, if we assume $\eta_x - \lfloor\frac{\eta_x}{\theta_x}\rfloor*\theta < \eta_z$, let $\eta_x' = \eta_x - \lfloor\frac{\eta_x}{\theta_x}\rfloor\theta_x$ then we need  $\min(\lfloor\frac{\eta_z}{\theta_x} \rfloor, \lfloor\frac{\eta_x'}{\theta_y} \rfloor)$ $D_b$ for padding. Note in this step we will convert $D_b$ from XX+YY rotation to ZZ+XX rotation. Other cases in rotation padding can be discussed similarly.

As for residual rotation $u_r = XX(\gamma_x)YY(\gamma_y)ZZ(\gamma_z)$, due to the limitation of rotation padding, we may not keep $\max (\gamma_x, \gamma_y, \gamma_z) < \theta_x$. For simplicity, here we only discuss the case where $\gamma_x > \theta_x > \gamma_y > \gamma_z$. Like the case in $D_a$, we first do at most two application of Proposition \ref{prop:dbtransform} to eliminate $\gamma_y$ and $\gamma_z$, transforming the rotation axes of $D_b$ if needed. Then, for the remaining $\gamma_x'$ (should now be smaller than $\gamma_x$), we can eliminate it by $\lceil \frac{\gamma_x'}{\theta_x} \rceil$ applications of Proposition \ref{prop:dbtransform}.
Other cases of residual rotation synthesis can be discussed similarly. 

\subsubsection{Optimized Compilation with $D_c$}\label{sect:dc}
The rotation padding process with $D_c$ is almost the same as the one with $D_b$ and may  permute rotation axes of $D_c$ (with local Clifford gates) as needed. We will stop padding once we find the minimal rotation angle of the residual $u_r$ is negative. This is to avoid causing a rotation with a large negative angle, incurring extra decomposition cost. 

As for residual synthesis, we considers the identities
$
	D_c X_0 D_cX_0 = e^{i2\theta_x X_0X_1},
	D_c Y_0 D_cY_0 = e^{i2\theta_y Y_0Y_1},
	D_c Z_0 D_cZ_0 = e^{i2\theta_z Z_0Z_1}.
$ Those identities can convert $D_c$ to $D_a$ and then we can follow the flow of compiling with $D_a$. As for the rotation angle selection of $D_a$, considering the periodic property of n-qubit Pauli rotations, we can set the optimal rotation angle of $D_a$ as
\begin{align}\footnotesize
	\begin{aligned}
		\theta_x' = \max(\min(2\theta_x, \vert \frac{\pi}{2} - 2\theta_x \vert), \min(2\theta_y, 
		\vert \frac{\pi}{2} - 2\theta_y \vert), \min(\vert 2\theta_z \vert, \vert \frac{\pi}{2} - \vert 2\theta_z \vert \vert)).
	\end{aligned}
\end{align} 

\subsubsection{Putting Things Together}

\begin{algorithm}\label{algo:kak2q}
	\caption{Decomposition with any 2Q Gate}\label{alg:two}
	\textbf{Input:} operator to be decomposed $U_y$, basis gate $U_x$;\\
	Do KAK decomposition on $U_x$, and get $U_x = (a_x\otimes b_x) D_x (c_x\otimes d_x)$; \\
	Do KAK decomposition on $U_y$, and get $U_y = (a_y\otimes b_y) D_y (c_y\otimes d_y)$; \\
	\uIf{$D_x$ is of the $D_a$ type}
	{
		$C$ = Compiling $D_y$ as in Section~\ref{sect:da};
	}
	\uElseIf{$D_x$ is of the $D_b$ type}
	{
		$C$ = Compiling $D_y$ as in Section~\ref{sect:db};
	}
	\uElseIf{$D_x$ is of the $D_c$ type}
	{
		$C$ = Compiling $D_y$ as in Section~\ref{sect:dc};
	}
	$C'$ = Replacing $D_x$ in $C$ with $(a^\dagger_x\otimes b^\dagger_x) U_x (c^\dagger_x\otimes d^\dagger_x)$; \\
	\textbf{Output:} $U_y = (a_y\otimes b_y) C' (c_y\otimes d_y)$;
\end{algorithm}

Now, we can compile any n-qubit quantum unitary gate with $D_a, D_b$ or $D_c$. We can first decompose the n-qubit quantum unitary gate into a group of 2q and 1q gates and then use Algorithm~\ref{alg:two} to compile any 2q unitary gates. It is easy to see that our strategy inherently allows hybrid compilation with $D_a$, $D_b$ and $D_c$ simultaneously. The rotation padding and residual synthesis process of \frameworkname{} is applicable to all $D_a$, $D_b$ and $D_c$.

\subsection{Compilation Cost Analysis}

As for the decomposition cost,
we observe that, adding one more $D_a(\theta_x)$ to a local two-qubit circuit may not increase the L1 norm of the Cartan coordinate of the circuit by more than $\theta_x$. This conclusion is sort of like unitary-version triangle inequality, and is formalized in the following theorem:

\begin{theorem}[Lower Bound of the Decomposition Cost]\label{prop:dacost}
	Assume the count of $D_a$ in the decomposed circuit of $U \in SU(4)$ is $N$. If $N\theta_x < \frac{\pi}{4}$, then $sin (k_t(U)) \le \sin N\theta_x$, where $k_t(U)$ is the L1 norm of the Cartan coordinate of $U$.
\end{theorem}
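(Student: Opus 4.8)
The plan is to reduce the statement to an inductive claim about how one application of $D_a(\theta_x)$ can move the Cartan coordinate, and then control the growth of the $L_1$ norm $k_t(\cdot)$ through the sine function, which is the natural ``metric'' here because of the $\pi/2$-periodicity of Pauli rotations. First I would set up notation: write the decomposed circuit as an alternating product $w_N D_a(\theta_x) w_{N-1} D_a(\theta_x)\cdots w_1 D_a(\theta_x) w_0$, where each $w_j$ is a local (tensor-product) two-qubit gate, and let $U_j$ denote the partial product using the first $j$ copies of $D_a$. Then $U_0$ is local, so $k_t(U_0)=0$, and I want to show $\sin k_t(U_{j+1}) \le \sin\big(k_t(U_j)+\theta_x\big)$ as long as both arguments stay in $[0,\pi/2]$; iterating gives $\sin k_t(U_N)\le \sin N\theta_x$ whenever $N\theta_x<\pi/4$, which is exactly the theorem.

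The core of the induction is a one-step perturbation bound: appending $D_a(\theta_x)$ (conjugated by arbitrary locals on either side, which do not change the Cartan coordinate of the new factor — it is still of $D_a$ type with angle $\theta_x$) changes the Cartan coordinate by ``at most $\theta_x$'' in the appropriate sense. I would prove this by working with a suitable invariant of the Cartan coordinate. The cleanest route is via the entangling-power / Makhlin-type invariants, or more concretely via the eigenvalues of $m(U) = U_B^T U_B$ where $U_B$ is $U$ in the magic basis: the Cartan coordinates $(\eta_z,\eta_y,\eta_x)$ are read off from the phases of those eigenvalues, and multiplying by a factor locally equivalent to $XX(\theta_x)$ rotates those phases in a controlled way. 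Translating ``phase rotation by $\theta_x$'' into a bound on the symmetric function $\sum$ or the $L_1$ combination $\eta_x+\eta_y+\eta_z$ is where the $\sin$ appears: $\sin$ linearizes the relevant trigonometric identity and absorbs the modular reductions (the $|\pi/2-2\theta|$ type folds) that keep each $\eta$ in the Weyl alcove. Concretely I would aim to show that $k_t(U_{j+1})$ and $k_t(U_j)$ differ by a quantity whose sine is $\le \sin\theta_x$ in absolute value, using a triangle-type inequality $\sin(A+B)\le \sin A + \sin B$ valid on $[0,\pi/2]$ — hence the hypothesis $N\theta_x<\pi/4$, which guarantees we never leave the regime where $\sin$ is increasing and the single-step bound composes.

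An alternative, possibly shorter, route avoids invariants entirely and uses Proposition~\ref{prop:datransform} and Proposition~\ref{prop:dbtransform} directly: those propositions give exact formulas for the Cartan coordinate of $D_a Z(\beta) D_a$-type products (and their $D_b$ generalizations), e.g. $\sin\eta = \sin\beta\sin(\theta_x+\theta_x')$ in \eqref{equ:daz}, which already exhibits the ``$\sin$ of a sum of angles'' structure. One would argue that the worst case for growth of $k_t$ at each step is captured by such a two-term product with all angles aligned, so that the extremal trajectory is exactly $\eta \mapsto$ something with $\sin\eta \le \sin(\text{previous} + \theta_x)$; this turns the theorem into an optimization over the $\beta$ parameters in those propositions, maximized when the relevant cosines are $1$.

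The main obstacle I expect is the one-step bound in full generality: a single $D_a(\theta_x)$ sandwiched between arbitrary locals can, in principle, interact with all three Cartan components at once (since locals permute and mix the $X,Y,Z$ axes), so I must show that no choice of surrounding locals lets $k_t$ jump by more than the $\sin\theta_x$ budget — i.e. that the ``adversary'' picking the locals cannot do better than aligning everything on one axis. Making that extremality argument rigorous — rather than just checking the aligned case — is the crux; I would handle it by expressing $k_t$ (or $\sin$ of it) as a $1$-Lipschitz function of the eigenphases of $m(U)$ with respect to a metric that is non-expansive under left-multiplication by a factor of $D_a(\theta_x)$ type, reducing the whole theorem to that single Lipschitz/non-expansiveness estimate plus the elementary inequality $\sin(A+B)\le\sin A+\sin B$ on $[0,\pi/4]$.
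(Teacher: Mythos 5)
Your overall skeleton matches the paper's strategy: both reduce the theorem to a one-step ``triangle inequality'' saying that multiplying by one local-conjugated copy of $D_a(\theta_x)$ cannot move $R(k_t(\cdot))$ by more than $\theta_x$, and then iterate $N$ times in the regime $N\theta_x<\pi/4$ where $\sin$ is monotone (the paper runs the iteration backwards as a contradiction argument, peeling off $D_a^\dagger$ factors until it reaches the identity, but that is only a cosmetic difference from your forward induction). You also correctly identify the global-phase/folding issue and the role of $R(\beta)=\min(\beta,\pi-\beta)$, which the paper handles by a four-way case split on the phases $g_U, g_{U_1}\in\{0,\pi/2\}$.

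The genuine gap is that you never actually prove the one-step bound, and both routes you sketch for it fall short. Your first route --- expressing $\sin k_t$ as a $1$-Lipschitz function of the eigenphases of $m(U)=U_B^TU_B$ with respect to a metric that is non-expansive under multiplication by a $D_a$-type factor --- is a restatement of the claim rather than a proof: the relation between the spectrum of $m(UV)$ and the spectra of $m(U)$, $m(V)$ is exactly the hard multiplicative-Horn-type question you would need to solve, and you give no mechanism for it. Your second route, optimizing over the $\beta$ parameters in Propositions~\ref{prop:datransform} and~\ref{prop:dbtransform}, does not cover the general case: those propositions describe sandwiches with specific one- or two-parameter local layers ($Z_1(\beta)$, $Z_0(\beta_0)Z_1(\beta_1)$), whereas the adversarial local between consecutive $D_a$'s is an arbitrary element of $SU(2)\otimes SU(2)$, so ``the worst case is the aligned one'' is precisely what needs proof and is not obtained by maximizing those formulas. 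The paper's actual mechanism is concrete and different: write $M^\dagger U M = U_r + iU_i$ in the magic basis, observe that $M^\dagger D_a M = e^{i\theta Z_0}$ with $Z_0$ real orthogonal so that $U_{1r}+iU_{1i} = (\cos\theta\, U_r - \sin\theta\, U_i Z_0) + i(\cos\theta\, U_i + \sin\theta\, U_r Z_0)$, identify $\sigma_1(U_i)=\sin k_t(U)$ and $\sigma_4(U_r)=\vert\cos k_t(U)\vert$ via an explicit ordering of the singular values (Proposition~\ref{prop:sigma}), and then apply Weyl's inequality for singular values of sums of matrices to get $\vert\cos k_t(U_1)\vert \le \cos(R(k_t(U))-\theta)$. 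That singular-value/Weyl step is the missing idea; without it (or an equivalent), your induction has no base to stand on.
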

\begin{proof}
	Please see the appendix.
\end{proof}
\begin{corollary}\label{cor:dbdc}
	Similarly, the following conclusions can be obtained by repeatedly applying Theorem~\ref{prop:dacost}:
	\begin{enumerate}
		\item For $D_b$, if $N(\theta_x+\theta_y) < \frac{\pi}{4}$ ($N$ is the decomposition cost of $U$ with $D_b$), then $\sin \, k_t(U) \le \sin N(\theta_x+\theta_y)$;
		\item For $D_c$, if $N(\theta_x+\theta_y+\vert\theta_z\vert) < \frac{\pi}{4}$ ($N$ is the decomposition cost of $U$ with $D_c$), then $\sin \, k_t(U) \le \sin N(\theta_x+\theta_y+\vert\theta_z\vert)$.
	\end{enumerate}
\end{corollary}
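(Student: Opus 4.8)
The plan is to reduce the $D_b$ and $D_c$ cases to the $D_a$ case of Theorem~\ref{prop:dacost}: split each $D_b$ (resp.\ $D_c$) gate into elementary two-qubit factors that are individually locally equivalent to single-axis rotations, and then re-run the argument behind Theorem~\ref{prop:dacost} with non-uniform rotation angles.

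\emph{Step 1 (factorization).} Since $X_0X_1$, $Y_0Y_1$ and $Z_0Z_1$ pairwise commute, $D_b(\theta_x,\theta_y)=e^{i\theta_x X_0X_1}e^{i\theta_y Y_0Y_1}$ and $D_c(\theta_x,\theta_y,\theta_z)=e^{i\theta_x X_0X_1}e^{i\theta_y Y_0Y_1}e^{i\theta_z Z_0Z_1}$. By the local-Clifford and sign-flip identities of Section~\ref{sect:bg} ($(S\otimes S)e^{i\phi X_0X_1}(S^\dagger\otimes S^\dagger)=e^{i\phi Y_0Y_1}$, $(H\otimes H)e^{i\phi X_0X_1}(H\otimes H)=e^{i\phi Z_0Z_1}$, $Y_0 e^{i\phi X_0X_1}Y_0^\dagger=e^{-i\phi X_0X_1}$), each factor is, up to single-qubit conjugators, locally equivalent to $D_a(\phi)$ with $\phi$ equal to $\theta_x$ or $\theta_y$ (resp.\ $\theta_x$, $\theta_y$ or $|\theta_z|$). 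Pushing those single-qubit conjugators into the neighbouring single-qubit layers (two adjacent single-qubit layers compose to one), a circuit that uses $D_b$ exactly $N$ times becomes one that uses $N$ gates locally equivalent to $D_a(\theta_x)$ and $N$ gates locally equivalent to $D_a(\theta_y)$, and a circuit using $D_c$ exactly $N$ times becomes one using $N$ gates of each angle $\theta_x$, $\theta_y$ and $|\theta_z|$. The total ``angle budget'' is thus $N(\theta_x+\theta_y)$, respectively $N(\theta_x+\theta_y+|\theta_z|)$.

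\emph{Step 2 (iterate Theorem~\ref{prop:dacost}).} Theorem~\ref{prop:dacost} is in effect a unitary triangle inequality: appending one gate locally equivalent to $XX(\phi)$ raises the $\ell_1$ Cartan norm $k_t$ of a two-qubit circuit by at most $\phi$, as long as the running total stays below $\pi/4$ — the $D_a$ statement being the $N$-fold iterate with the constant angle $\theta_x$. I would apply this step to the circuit produced in Step~1 one elementary factor at a time, letting the angle be $\theta_x$ or $\theta_y$ (resp.\ $\theta_x$, $\theta_y$ or $|\theta_z|$) as appropriate. Each partial sum of angles is at most the total budget, which is $<\pi/4$ by hypothesis, so every step is legitimate, and after the last factor $k_t(U)\le N(\theta_x+\theta_y)$ (resp.\ $k_t(U)\le N(\theta_x+\theta_y+|\theta_z|)$); monotonicity of $\sin$ on $[0,\pi/4]$ then yields the stated inequalities.

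\emph{Main obstacle.} The point needing care is that Theorem~\ref{prop:dacost} is phrased for a single fixed angle, so one must check that its proof survives non-uniform angles. In the magic basis the proof bounds how far multiplication by the diagonal factor corresponding to an $XX(\phi)$-rotation can move the eigenvalue arguments of $U_B U_B^{T}$ (the magic-basis image of $U$ times its transpose); this bound is invoked once per factor and never uses that the angles agree, so the induction goes through verbatim with $\theta_x$ replaced at step $k$ by the angle $\phi_k$ of the $k$-th elementary factor, $\sum_k\phi_k$ being exactly $N(\theta_x+\theta_y)$ or $N(\theta_x+\theta_y+|\theta_z|)$. A minor subtlety: when $\theta_z<0$ the factor $e^{i\theta_z Z_0Z_1}$ needs the extra $Y_0$ sign flip, which is why $|\theta_z|$, and not $\theta_z$, enters the budget.
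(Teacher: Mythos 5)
Your proposal is correct and fills in the argument in essentially the way the paper intends: the paper offers no detailed proof beyond the phrase ``repeatedly applying Theorem~\ref{prop:dacost},'' and your factorization of $D_b$ (resp.\ $D_c$) into commuting single-axis factors, each locally equivalent to a $D_a$ of angle $\theta_x$, $\theta_y$, or $\vert\theta_z\vert$, followed by one application of the per-gate triangle inequality (Proposition~\ref{prop:datrianglecos}) per factor, is exactly that reading made precise. You also correctly identify the only real subtlety — that the per-step bound in the appendix is angle-agnostic, so the contradiction argument survives non-uniform angles with the running budget $\sum_k\phi_k = N(\theta_x+\theta_y)$ or $N(\theta_x+\theta_y+\vert\theta_z\vert)$.
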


Theorem \ref{prop:dacost} and Corollary \ref{cor:dbdc} indicate that the L1 norm of the Cartan coordinate of quantum operations measures their entangling power. These theorems indeed provide a lower bound of decomposition cost. Even the optimal compiler cannot be better than this lower bound. As for our framework, we can see that the decomposition cost resulting from the rotation padding step approaches the lower bound. Therefore, our framework deviates from the optimal compiler only by an additive factor induced by the residual synthesis step. 

As for the compilation time, both rotation padding and residual synthesis procedure of our framework are analytical and take constant computation time, for all $D_a, D_b$ and $D_c$. 
\section{Evaluation}

In this section, we first compare the proposed compiler framework with the state-of-the-art compiler that can also work with diverse native two-qubit basis gates. We then evaluate the effectiveness of the proposed compiler in quantum instruction set design.

\subsection{Experiment Setting}

\textbf{Platform:} All experiments are performed with Python 3.8.10, on a Ubuntu 20.04 server with 16 CPU cores and 32GB RAM.

\textbf{Baseline:} We first compare the proposed compiler \frameworkname{} to the numerical search-based compiler BQSKit~\cite{BQSKIT, lao2021designing, lin2022let}, which also supports compilation with diverse two-qubit gates. 

We then compare \frameworkname{} to XXDecomposer~\cite{peterson2022optimal} in Qiskit, which can work with diverse $D_a$ type gates. The basic idea of XXDecomposer is to analytically compute the circuit which has the nearest Cartan coordinate to the target, within the circuit subspace $P_n$ which consists of $n$ $D_a$ gates. If the target Cartan coordinate is not in $P_n$, XXDecomposer will further consider $P_{n+1}$.

\textbf{Metrics:} The metrics considered is the \textbf{compilation time} for quantum operations, and the \textbf{decomposition cost}, which means the required number of basis two-qubit gates to implement/compile a specific quantum operation. We also consider the fidelity and the latency of the decomposed circuit. The circuit fidelity is computed with the formula $\prod_{g \in circ} (1-e_g)$, where $e_g$ is the error rate of the gate $g$. The circuit latency is normalized to the count of CX gates.

\begin{figure*}[!ht]
	\centering	 
	\begin{minipage}{0.32\textwidth}
		\centering
		\includegraphics*[height=0.70\textwidth]{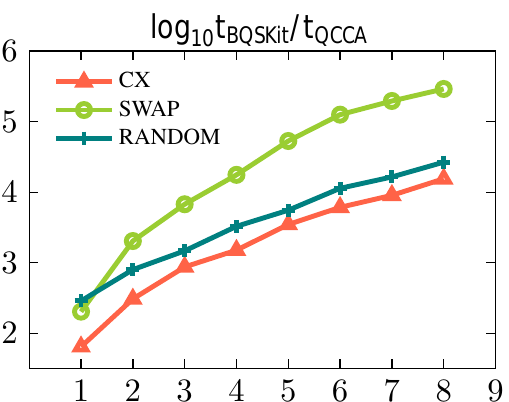} \\[-0pt]
		(a) For \frameworkname{} with $D_a$
	\end{minipage}
	\begin{minipage}{0.32\textwidth}
		\centering
		\includegraphics*[height=0.70\textwidth]{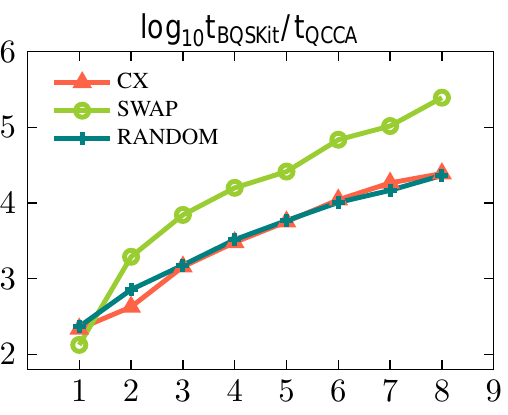} \\[-0pt]
		(b) For \frameworkname{} with $D_b$
	\end{minipage}
	\begin{minipage}{0.32\textwidth}
		\centering
		\includegraphics*[height=0.70\textwidth]{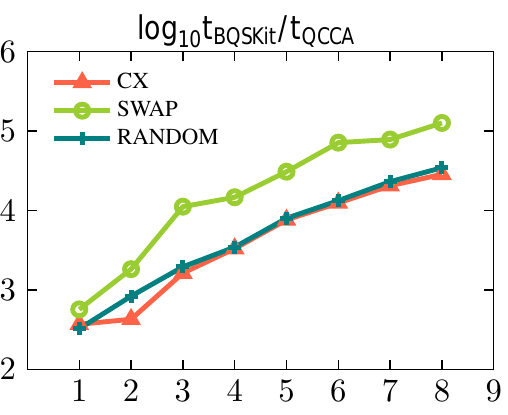} \\[-0pt]
		(c) For \frameworkname{} with $D_c$
	\end{minipage}
	\caption{Comparing the compilation time by \frameworkname{} to that by BQSKit. Both BQSKit and \frameworkname{} adopt the same 2Q basis gate in compilation. The y axis is $\log_{10} \frac{t_{BQSKit}}{t_{QCCA}}$, where $t_{BQSKit}$ and $t_{QCCA}$ are the compilation time by BQSKit and \frameworkname{}, respectively. As for the X axis, for (a), x axis value $k$ means to use $D_a$ with the rotation angle $\theta_x = \frac{\pi}{4k}$; for (b), x axis value $k$ means to use $D_b$ with the rotation angle $\theta_x = \frac{\pi}{4k}, \theta_y = \frac{\pi}{8k}$; for (c), x axis value $k$ means to use $D_c$ with the rotation angle $\theta_x = \frac{\pi}{4k}, \theta_y = \frac{\pi}{8k}, \theta_z = \frac{\pi}{16k}$.
	}\label{fig:timecomp}
\end{figure*}

\textbf{Benchmark:} For the comparison to the baseline, we use the benchmark consisting of the CX gate, the SWAP gate, and 100 random unitary operators. The CX gate is the mostly widely used two-qubit gates in various existing quantum circuits, while the SWAP gate is essential for executing quantum circuits with limited qubit connectivity. The random unitary can be used to demonstrate the average decomposition cost of compilers. Due to the randomness of the numerical compilation in the baseline, we will repeat the compilation by 100 times and compute the averaged metrics.

In quantum instruction set design task, we need application-aware context and consider four benchmark programs: Quantum Approximate Optimization Algorithm (QAOA), Quantum Fourier transform (QFT), Bernstein-Vazirani algorithm (BV) and Pauli Evolution programs (PauliEvo). 

QAOA is a promising quantum application in the NISQ (Noisy Intermediate-Scale Quantum) era, designed to solve combinatorial optimization problems. QAOA programs have also been widely used to benchmarking the performance of quantum processors~\cite{Arute2020QuantumAO}. For QAOA, Our experiments use the MAXCUT ansatz over 100 random regular graphs where we will have a ZZ interaction between two nodes with a probability 0.3. 

QFT is a key subroutine in many important quantum applications, such as Shor's algorithm~\cite{Shor1994AlgorithmsFQ} and the quantum adder~\cite{RuizPerez2014QuantumAW}. The QFT program with $n$ qubits contains $n(n-1)/2$ controlled-rotation gates, with rotation angles as small as $2\pi/2^n$. 

BV algorithm is used to learn a string encoded in a oracle function and is important for quantum cryptography applications~\cite{xie2019using}. We use BV programs with a secret string containing all 1s. 

Pauli Evolution programs like UCCSD~\cite{Fedorov2021UnitarySC} are widely used to simulate the property of physics systems. To evaluate the averaged performance, we consider the evolution of 10 random Pauli strings where each  involves at most $j$ qubits ($j$ is randomly chosen between 2 and $n$), for a $n$-qubit Pauli evolution circuit.

\textbf{Hardware modeling:} For the calibration of native gates, we follow existing schemes \cite{lin2022let,Chen2023OneGS}. 
When evaluating the design options of the quantum instruction set, we follow the established experimentally realistic hardware model for existing superconducting processors~\cite{peterson2022optimal, Chen2023OneGS}. We summarize key points as follows:
\begin{enumerate}
	\item For device Hamiltonian consisting of XX interactions between qubits~\cite{peterson2022optimal}, the latency $\tau$ of $D_a, D_b$ and $D_c$ gates is proportional to the $L1$ norm of their Cartan coordinates (i.e., $\theta_x + \theta_y + \vert \theta_z \vert$); 
	\item For device Hamiltonian consisting of both XX and YY interactions~\cite{Chen2023OneGS} between qubits, the latency $\tau$ of $D_a$ and $D_b$ gates is proportional to the $L\infty$ norm of their Cartan coordinates (i.e., the value of $\theta_x$);
	\item The error rate of $D_a, D_b$ and $D_c$ gates is proportional to their latency $\tau$~\cite{peterson2022optimal,Chen2023OneGS}, and there is an experimentally-fitted formula: $p_e = m\tau + b$, $\frac{\pi}{4}\cdot m \approx 5.76\times 10^{-3}$, $b\approx 1.909\times 10^{-3}$. 
\end{enumerate}

\begin{figure*}[!ht]
	\centering
	\begin{minipage}{0.32\textwidth}
		\centering
		\includegraphics*[width=0.95\textwidth]{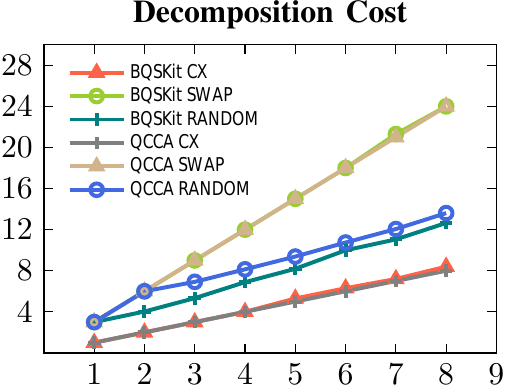} \\[-3pt]
		(a) For \frameworkname{} with $D_a$
	\end{minipage}
	\begin{minipage}{0.32\textwidth}
		\centering
		\includegraphics*[width=0.95\textwidth]{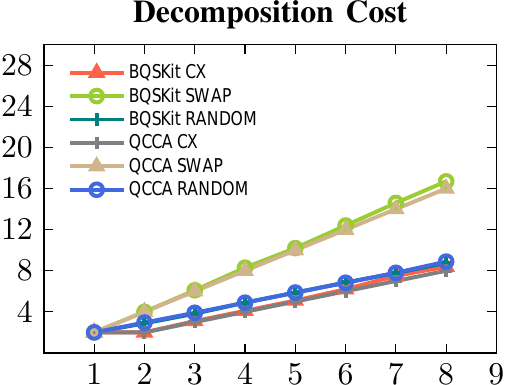} \\[-3pt]
		(b) For \frameworkname{} with $D_b$
	\end{minipage}
	\begin{minipage}{0.32\textwidth}
		\centering
		\includegraphics*[width=0.95\textwidth]{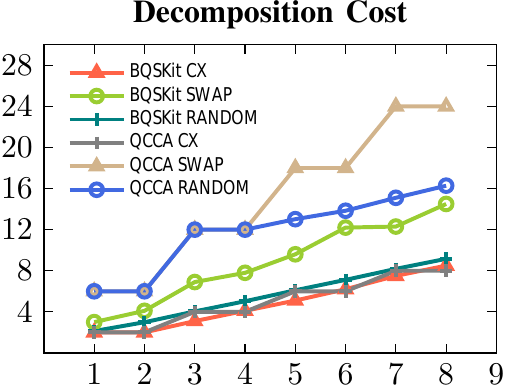} \\[-3pt]
		(c) For \frameworkname{} with $D_c$
	\end{minipage}
	\caption{Comparing the decomposition cost by \frameworkname{} to that by BQSKit. Both BQSKit and \frameworkname{} adopt the same 2Q basis gate in compilation. As for the X axis, for (a), x axis value $k$ means to use $D_a$ with the rotation angle $\theta_x = \frac{\pi}{4k}$; for (b), x axis value $k$ means to use $D_b$ with the rotation angle $\theta_x = \frac{\pi}{4k}, \theta_y = \frac{\pi}{8k}$; for (c), x axis value $k$ means to use $D_c$ with the rotation angle $\theta_x = \frac{\pi}{4k}, \theta_y = \frac{\pi}{8k}, \theta_z = \frac{\pi}{16k}$. }\label{fig:costcomp}
\end{figure*}

\subsection{Comparing to the baseline}

\begin{figure*}[t]
	\centering
	\begin{minipage}{0.45\textwidth}
		\centering
		\includegraphics*[width=\textwidth]{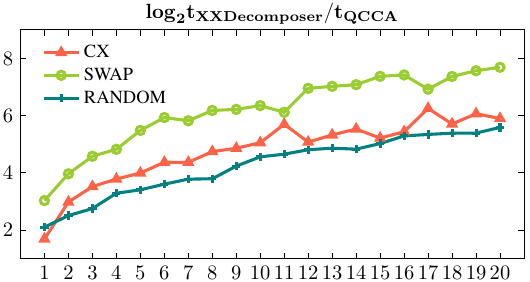} \\[-1pt]
		(a)
	\end{minipage}
	\begin{minipage}{0.45\textwidth}
		\centering
		\includegraphics*[width=\textwidth]{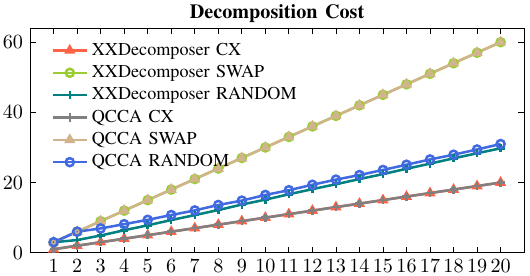} \\[-1pt]
		(b)
	\end{minipage}
	\begin{minipage}{0.45\textwidth}
		\centering
		\includegraphics*[width=\textwidth]{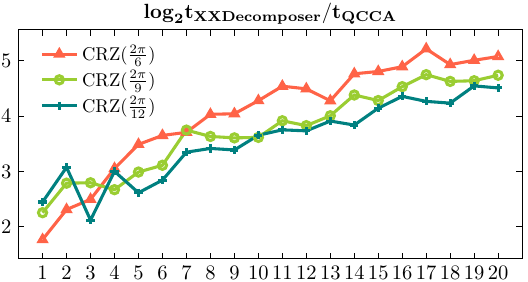} \\[-1pt]
		(c)
	\end{minipage}
	\begin{minipage}{0.45\textwidth}
		\centering
		\includegraphics*[width=\textwidth]{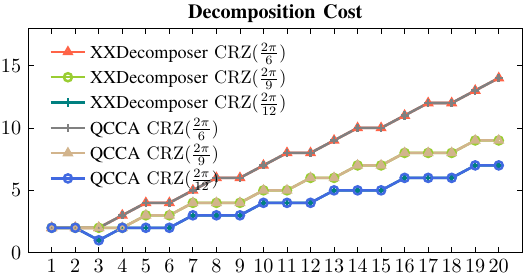} \\[-1pt]
		(d)
	\end{minipage}
	\caption{Comparing \frameworkname{} to XXDecomposer (can only works with $D_a$ type gates). The y axis is $\log_{2} t_{XXDecomposer}/t_{QCCA}$, where $t_{XXDecomposer}$ and $t_{QCCA}$ are the compilation time by XXDecomposer and \frameworkname{}, respectively. As for the X axis in (a)(b)(c)(d), x axis value $k$ means to use $D_a$ with the rotation angle $\theta_x = \frac{\pi}{4k}$.
	}\label{fig:compdaxx}
\end{figure*}

The comparison of the proposed \frameworkname{} framework to the baseline BQSKit in terms of compilation time overhead and decomposition cost are shown in 
Figure~\ref{fig:timecomp} and~\ref{fig:costcomp}, respectively. 

\noindent\textbf{Compilation time comparison between \frameworkname{} and BQSKit:} As shown in Figure \ref{fig:timecomp}, the compilation time by the BQSKit is thousands of times of that by \frameworkname, and the ratio grows bigger as the rotation angle becomes smaller. For example, considering the compilation with $D_a = e^{i\frac{\pi}{32}X_0X_1}$, the compilation time of the SWAP gate by BQSKit  is about $2.5\times 10^5$ times of the time by \frameworkname.
The time reduction by \frameworkname{} is due to its analytical nature.
For the BQSKit, as the rotation angle in $D_a, D_b$ or $D_c$ becomes smaller, a circuit with more 2Q gates is needed in BQSKit to approximate the target unitary, according to Theorem~\ref{prop:dacost}. Further, BQSKit will take $O(36n^2)$ ($n$ is 2Q gate count) time/resource to compute and store the gradient for the numerical search engine, so that to finally reach the compilation result.
Instead, the compilation time/resource by \frameworkname{} is constant across operations to be decomposed. 

It is important to have a small compilation time for 2Q quantum operators to make the compiler scalable for practical quantum computing. Many real-world quantum applications, such as Grover's algorithm~\cite{Grover1996AFQ}, often consists of millions of two-qubit quantum operators. With \frameworkname, $10^6$ two-qubit quantum operators can be processed within 20 minutes, no matter how small the rotation angle in the basis 2Q gate is. Instead, for BQSKit, even if the basis 2Q gate is just the CX gate, compiling $10^6$ two-qubit quantum operators will take about 5 days. This huge amount of compilation time will hurt the advantage of quantum processing.

\noindent\textbf{Decomposition cost comparison between \frameworkname{} and BQSKit:} As shown in Figure \ref{fig:costcomp}, the decomposition cost by \frameworkname{} is comparable to that by the BQSKit, when using $D_a$ or $D_b$. The averaged decomposition cost difference between BQSKit and \frameworkname{} is about 1 with $D_a$ and is about 0 with $D_b$, for the tested benchmark programs. This demonstrates that for $D_a$ and $D_b$, it is possible to approximate the optimal synthesis (i.e., BQSKit) with the analytical decomposition technique. The result also demonstrates that $D_b$ may be a more efficient basis gate than $D_a$ (when $\theta_x$ is the same), as $D_b$ leads to lower decomposition cost for test programs.

However, for the case with $D_c$, \frameworkname{} is not as efficient as BQSKit. 
The residual synthesis with $D_c$ in Section \ref{sect:dc} uses two $D_c$ to produce the gate of the $D_a$ type, making the decomposition cost with $D_c$ at least the multiple of 4, missing the optimization opportunity for the quantum operator whose compiled circuit indeed only requires $4k+2$ ($k\in \mathbb{Z}$) invocations of $D_c$. This demonstrates that it is necessary to take thousands of times more compilation time if we want to approximate the optimal synthesis with $D_c$. On the other hand, $D_c$ is naturally less efficient than $D_b$. No matter for BQSKit or \frameworkname, the decomposition cost with $D_c$ is on average slightly larger than that by $D_b$ when the XX rotation angle $\theta_x$ is the same, as shown in Figure \ref{fig:costcomp}. 

Overall, with $D_a$ or $D_b$, the analytical decomposition by \frameworkname{} can shorten the compilation time by thousands of times when compared to the numerical decomposition by BQSKit, at the cost of slightly increasing the decomposition cost. 

\noindent\textbf{Comparing \frameworkname{} to XXDecomposer:} The comparison of the proposed \frameworkname{} framework and the baseline XXDecomposer in terms of compilation time overhead and decomposition cost are shown in Figure~\ref{fig:compdaxx}(a)(b)(c)(d), respectively. We can see that \frameworkname{} still achieves significant speedup (up to 206.4x for the tested configurations) comparing to XXDecomposer and this speedup grows as the rotation angle in $D_a$ become smaller. 
\frameworkname{} compiles all 2Q operators globally (i.e., with a fixed compilation time), while XXDecomposer adopts a path-like searching for candidate circuits. Though XXDecomposer finds the candidate circuit analytically in each step of searching, the increased search step induced by a smaller rotation angle in $D_a$ still hurts the scalability of XXDecomposer.

As for the decomposition cost, \frameworkname{} achieves the same decomposition cost as XXDecomposer for the CX, SWAP and CRZ (controlled z-rotation) gates. For the random unitary operator, the local optimality of XXDecomposer leads to overall slightly lower compilation cost than \frameworkname{}, with a  difference about 1 averaged over all tested configurations. 
However, comparing to other 2Q quantum operators, the CX, SWAP and CRZ gates are most commonly-used among popular quantum applications, e.g., QAOA, QFT, BV and Pauli evolution circuits. Moreover, \frameworkname{} can work with native gates of $D_b$ and $D_c$ types. As a comparison, it still remains unclear how to generalize XXDecomposer to work with $D_b$ and $D_c$.

\subsection{Empirical Quantum Instruction Set Design}

In this section, we  evaluate various options of  quantum instruction set design with the proposed compiler framework. An efficient quantum instruction set can improve the fidelity of compiled circuits and reduce the latency, enhancing the quantum advantage.

\begin{figure*}[!ht]
	\centering
	\begin{minipage}{0.43\textwidth}
		\centering
		\includegraphics*[width=0.95\textwidth]{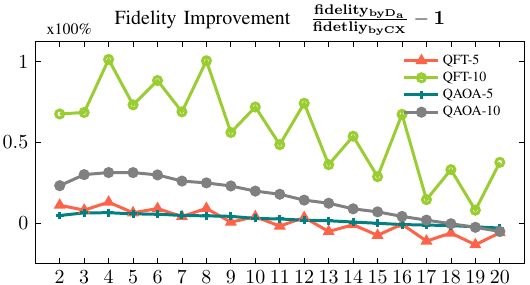} \\[-1pt]
		(a)
	\end{minipage}
	\begin{minipage}{0.43\textwidth}
		\centering
		\includegraphics*[width=0.95\textwidth]{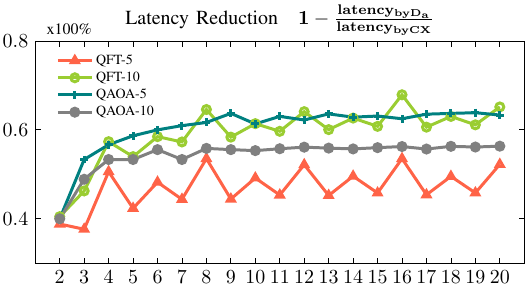} \\[-1pt]
		(b)
	\end{minipage}
	\begin{minipage}{0.43\textwidth}
		\centering
		\includegraphics*[width=0.95\textwidth]{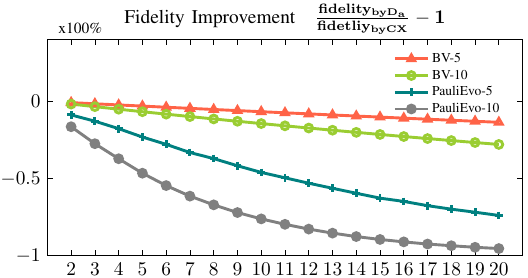} \\[-1pt]
		(c)
	\end{minipage}
	\begin{minipage}{0.43\textwidth}
		\centering
		\includegraphics*[width=0.95\textwidth]{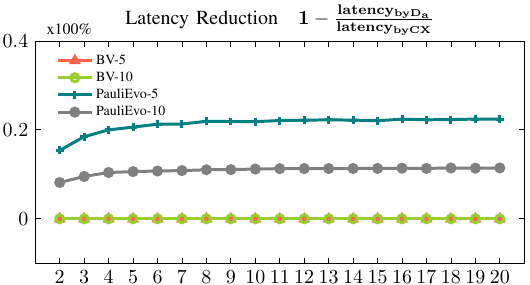} \\[-1pt]
		(d) 
	\end{minipage}
	\caption{Comparing $D_a$ to CX on compiling various quantum applications. For the X axis in (a)(b)(c)(d), x axis value $k$ means to use $D_a$ with the rotation angle $\theta_x = \frac{\pi}{4k}$. QFT-5 means a 5-qubit QFT circuit. Other legend labels are named in the same manner.}\label{fig:smallbench}
\end{figure*}

\textbf{(Q1) Why using basis 2Q gate with a small rotation angle?}

As shown in Figure~\ref{fig:smallbench}(a), using $D_a$ with some small rotation angles (e.g., $\theta_x = \frac{\pi}{16}$) can improve the fidelity of QFT and QAOA circuits. This is because there are many small rotation angles in the CRZ gates of the QFT circuit and in the ZZ interaction gate of QAOA circuits. We may only use two invocations of $D_a$ to implement those gates. Even with the CX gate, we still need two invocations of CX  to compile those gates. 
According to the experimentally-realistic error model \cite{peterson2022optimal,Chen2023OneGS}, the error rate of the CX gate is far larger than that of the $D_a$ gate. Thus, for circuits consisting of many 2Q operators that has a small Cartan coordinate, using $D_a$ gate with a small rotation angle will provide considerable improvement for the fidelity of the compiled circuit. We also points out that, the fidelity improvement is bounded by the amount rather than the percentage of 2Q operators with a small Cartan coordinate, due to the fact that the circuit fidelity is computed as the product of individual gate fidelity. Thus, the fidelity improvement becomes larger when the QFT and QAOA program involve more qubits, as shown in Figure~\ref{fig:smallbench}(a). 

The latency reduction induced by using $D_a$ with a small $\theta_x$ (see Figure \ref{fig:smallbench}(b)) is due to the same reason. However, this latency reduction is bounded by the percentage of 2Q operators with a small Cartan coordinate, existing in QFT and QAOA circuits. This means the latency reduction induced by using a smaller rotation angle in $D_a$ will gradually converge to some fixed value, as supported by the tendency shown in Figure \ref{fig:smallbench}(b). On the other hand, using $D_a$ with a too small rotation angle will hurt the circuit fidelity due to the increased 2Q gate count, as shown in Figure \ref{fig:smallbench}(a). 
Thus, to find the $D_a$ which maximally boosts the circuit performance, we need to make a good balance between the fidelity improvement and the latency reduction. 

Compared to other state-of-the-art tools, \frameworkname{} can help finding the optimal rotation angle for $D_a$, $D_b$ or $D_c$ much more efficiently due to its extremely low time and resource overhead. \frameworkname{} can decompose a quantum program with $10^3$ different 2Q operators and thus getting its fidelity and latency data within one second, making \frameworkname{} useful not only for the quantum instruction set design in the near term, but also for future large-scale quantum applications.

\textbf{(Q2) Will having mixed 2Q basis gates in the instruction set be more efficient?}

By comparing Figure \ref{fig:compdaxx}(c) to (a), we can see that, for different benchmark programs, the rotation angle of $D_a$ that leads to the best circuit fidelity is different.  For the BV and Pauli evolution benchmark, most 2Q operators are just the CX gate, and the fidelity of CX indirectly implemented with $D_a$ ($\theta_x < \frac{\pi}{4}$) is lower than the directly implemented CX on hardware. In this case, the fidelity improvement will decrease as the circuit size increases or as $\theta_x$ decreases.

Therefore, having $D_a$ with different rotation angles in the instruction set is important for always obtaining circuit fidelity improvement over various quantum applications. For the tested eight benchmark programs, we prefer the mixed instruction set $\{CX, D_a(\frac{\pi}{16})\}$ if putting more emphasis on the fidelity improvement, or $\{CX, D_a(\frac{\pi}{32})\}$ if putting more emphasis on the latency reduction. This design also works for the $D_b$ gate since for those programs, the required number of invocations is the same for $D_a$ and $D_b$ if their $\theta_x$ is the same.

In the general case, hardware engineers can use \frameworkname{} to test more benchmark programs, and select appropriate rotation angles for 2Q basis gates in an application-aware way demonstrated above. Since \frameworkname{} naturally supports hybrid compilation, e.g., enabling simultaneous usage of $D_a$ gates with varied rotation angles, hardware engineers do not need to worry about the porting of quantum circuits. 
Researchers~\cite{lin2022let,Chen2023OneGS} have proposed various calibration methods for 2Q basis gates. For a mixed instruction set $G$ with $m$ different basis gates, its calibration overhead is proportional to $m$. With \frameworkname{}, it is easy to take the calibration cost into the mixed instruction set design, only needing to add a term related to the calibration cost in the objective function that measures the efficiency of the instruction set $G$.

\textbf{(Q3) Which type of gates is most efficient? $D_a$, $D_b$, or $D_c$?}

We rule out $D_c$ for the following reasons. Firstly, the decomposition cost by $D_b$ is comparable to that by $D_c$ if their $\theta_x$ and $\theta_y$ are the same, as shown in Figure~\ref{fig:costcomp}(b)(c). Secondly, $D_c$ gates are hard to prepare in existing quantum platforms \cite{smith2016practical,qiskit2024,Chen2023OneGS}, whose device Hamiltonian either only consists of XX interactions or XX+YY interactions. In such cases, we must use $D_a$ or $D_b$ to implement $D_c$ in an indirect manner. Last but not least, as shown in Figure~\ref{fig:costcomp}(c), when using $D_c$, it is hard to approach the optimal compilation with analytical schemes. This fact will cause great difficulty in designing a near-optimal and scalable compiler based on $D_c$.

Whether $D_a$ or $D_b$ is more efficient depends on the hardware platform. For the quantum processor based on the XX coupling between qubits, the instruction combination $G_1 = \{D_a(\theta_x), D_a(\theta_y)\}$ is more efficient than $G_2 = \{D_b(\theta_x, \theta_y)\}$ since the gate $D_b(\theta_x, \theta_y)$ is implemented by $D_a(\theta_x)S_{0,1}D_a(\theta_y)S^\dagger_{0,1}$ in such a platform. Any fidelity improvement and latency reduction achieved by $G_2$ can thus also be achieved by $G_1$. On the other hand, $G_1$ can achieve efficient decomposition that cannot be achieved by $G_2$. For example, $e^{i(\theta_x+\theta_y)X_0X_1}$ can be achieved by $D_a(\theta_x)D_a(\theta_y)$ and $D_b Y_1(\beta) D_b^\dagger$. It is obvious that the former circuit has a smaller latency and a higher fidelity.

For the hardware platform based on the XX+YY coupling between qubits, we argue $D_b$ is more efficient. In such a platform, for a fixed $\theta_x$, $D_b(\theta_x, \theta_y)$ has a similar latency and fidelity as $D_a(\theta_x) = D_b(\theta_x, 0)$, according to the established hardware model \cite{Chen2023OneGS}. While the performance of such $D_b$ and $D_a$ are almost identical for benchmark programs whose 2Q operators are locally equivalent to XX interactions, the extra $\theta_y$ in $D_b$ can largely reduce the decomposition cost of the SWAP gate (see Figure~\ref{fig:costcomp}(a)(b)), which appears frequently (up to 4x the count of other 2Q operators \cite{Iten2015QuantumCF}) in qubit-connectivity-limited quantum processors. We can further find the optimal $\theta_y$ for the fixed $\theta_x$ by examining the decomposition cost with \frameworkname. For example, for $\theta_x = \frac{\pi}{4}$, $\theta_y = \frac{\pi}{8}$ can achieve the best decomposition cost for both random unitary operators and the SWAP gate; for $\theta_x = \frac{\pi}{8}$, $\theta_y = \frac{\pi}{8}$ is the best choice.

The  discussion  above invalidates the option of having both $D_a$ and $D_b$ in the quantum instruction set. This is because the most efficient type of gates are determined by the type of qubit coupling. We may have both $D_a$ and $D_b$ in the instruction set only when the device Hamiltonian has mixed types of coupling between different qubit pairs. No matter in which case, \frameworkname's inherent capability of supporting compilation with hybrid native gate combination always makes it easy to port quantum circuits to different hardware platforms and instruction sets.

\section{Related Work}

Works that directly relate to our paper target quantum compilation and quantum instruction design. 
For quantum compilation works, most existing compilers \cite{khaneja2000cartan, shende2005synthesis, shende2003minimal, dawson2005solovay, barenco1995elementary, Bergholm2004QuantumCW, de2016block, Vatan2003OptimalQC, Vidal2003UniversalQC, Aho2003CompilingQC, Cybenko2001ReducingQC, Vartiainen2003EfficientDO, Mttnen2004QuantumCF,Tucci1998ARQ,smith2016practical,Sivarajah2020tketAR,Smith2020AnOI,Amy2019staqAFQ,Khammassi2020OpenQLA,cirq,qiskit2024} assume using a few standard two-qubit gates like the CX/CZ and iSWAP gate in the compilation. Those compilers cannot work with an arbitrarily entangling two-qubit gate, let alone evaluate the pros and cons of different two-qubit basis gates.

On the other hand, researchers have tried to extend the border of the quantum instruction set architecture and design new quantum instructions, e.g., the continuous fsim gate family \cite{Nguyen2022ProgrammableHI,foxen2020demonstrating}, XY gate family \cite{Abrams2020ImplementationOX}, and CRISC gate family \cite{Chen2023OneGS}. However, those works only provide analytical decomposition schemes for a few fixed two-qubit gates,  like the $\sqrt{iSWAP}$ gate\cite{Prez2021ErrorDivisibleTG}. They do not support analytical decomposition with arbitrary 2Q entangling gates. Lao et al.\cite{lao2021designing} and Lin et al.\cite{lin2022let} propose to use the numerical compilation method \cite{BQSKIT} so as to support all kinds of two-qubit basis gates. However, numerical methods often suffer from heavy time complexity, hindering wider exploration of the quantum instruction set design space.
Peterson et al.\cite{peterson2022optimal} proposes a locally analytical compiler for all gates in the XX gate family. However, their tool cannot work with 2Q basis gates of the $D_b$ and $D_c$ type. 

\section{Conclusion and Future Work}

This paper proposes an analytical framework \frameworkname{} to analyze the Cartan coordinate of two-qubit operators and proposes a series of constructions to convert between 2Q operators based on the feature of their Cartan coordinates. This work further proposes a compiler that significantly reduces the conversion cost between two-qubit operators and demonstrates comparable overall decomposition results with the numerically optimal compiler. Compared to the state-of-the-art, \frameworkname{} reduces the time overhead of quantum instruction set design by several orders of magnitude. With \frameworkname, 
we can also suppress circuit errors by including noise-induced gate invariant into the instruction set for compilation.

While this work establishes comprehensive analytical compilation with diverse native 2Q gates, there is still much space left for improvements.

\textbf{Algorithm construction with diverse native gates:} This work mainly discusses how to use various native 2Q gates to construct other 2Q operators. Another interesting direction is to explore how to use $D_a$ (or $D_b$ or $D_c$) to directly synthesize n-qubit quantum operations and algorithms, such as the Pauli string evolution, without sticking to the CX-based synthesis. This may help us gain benefits with the native 2Q gate that has a small Cartan coordinate, even if the benchmark program is full of CX-like gates.

\textbf{Extending to higher dimensions:} This work mainly works with diverse 2Q native gates. It is possible to have higher dimensional native gates, e.g., CCZ gate,  on quantum hardware platforms like those based on neutral atoms \cite{Evered2023HighfidelityPE}. It is interesting to explore whether there is analytical conversion schemes between 3Q gates, or 2Q and 3Q gates. Moreover, to what upper bound dimension, analytical conversion between native gates exists. This exploration can help establish more expressive intermediate representation in compilation, enhancing the portability of quantum circuits.

\section*{Appendix}
\label{appendix}

We illustrate the proof of Theorem~\ref{prop:dacost} as follows. The proof is based on the KAK decomposition, of which the details can be founded in \cite{tucci2005introduction}. 
For a given $U \in SU(4)$, consider 
$$M^\dagger U M = U_r + iU_i = Q_L (D_r + i D_i)Q_R^T,$$
where $Q_L, Q_R$ are orthogonal matrices, $D_r, D_i$ are diagonal matrices,
$M = \frac{1}{\sqrt{2}}$
\begingroup \tiny
$\begin{pmatrix}
	1 & 0 & 0 & i \\
	0 & i & 1 & 0 \\
	0 & i & -1 & 0 \\
	1 & 0 & 0 & -i
\end{pmatrix}
$
\endgroup
$,(D_r + i D_i) = exp(i\, \text{diag}(\eta_0,\eta_1,\eta_2,\eta_3)),
$
with $\eta_i$ determined by the global phase $g \in \{0, \frac{\pi}{2}\}$ and the Cartan coordinate ($k_x, k_y, k_z$), $\frac{\pi}{4} \ge k_x \ge k_y \ge \vert k_z \vert \ge 0$:
\begingroup \small
\begin{equation}
	\begin{pmatrix}
		\eta_0 \\
		\eta_1 \\
		\eta_2 \\
		\eta_3 \\
	\end{pmatrix} = \begin{pmatrix}
		+1 & +1 & -1 & +1 \\
		+1 & +1 & +1 & -1 \\
		+1 & -1 & -1 & -1 \\
		+1 & -1 & +1 & +1
	\end{pmatrix} \begin{pmatrix}
		g \\
		k_x \\
		k_y \\
		k_z \\
	\end{pmatrix}
\end{equation}
\endgroup

Without ambiguity, here $\sigma_i(U)$ refers to the $i$-th maximal singular value of $U$.
\begin{prop}\label{prop:sigma} If $g=0$, then the singular value of $U_i$ is:
	\begin{equation}\label{equ:uisingular}\small
		\begin{pmatrix}
			\sigma_1(U_i) \\
			\sigma_2(U_i) \\
			\sigma_3(U_i) \\
			\sigma_4(U_i) 
		\end{pmatrix} = \begin{pmatrix}
			\sin (k_x + k_y + k^+_z ) \\
			\sin (k_x + k_y - k^+_z) \\ 
			\sin (k_x - k_y + k^+_z) \\
			\sin \vert -k_x + k_y + k^+_z \vert
		\end{pmatrix},
	\end{equation}
	where $k_z^+ = \vert k_z \vert$. If $g=\frac{\pi}{2}$, we only need to replace $U_i$  with $U_r$.
\end{prop}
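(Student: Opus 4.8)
The plan is to notice that the only genuinely new content of this statement is a sign/monotonicity computation; everything structural is already contained in the displayed magic‑basis decomposition. First I would record that, because $Q_L,Q_R$ are \emph{real} orthogonal and $D_r,D_i$ are \emph{real} diagonal, the entrywise real and imaginary parts of $M^\dagger U M = Q_L(D_r+iD_i)Q_R^{T}$ are exactly $U_r=Q_LD_rQ_R^{T}$ and $U_i=Q_LD_iQ_R^{T}$, both real matrices. Multiplication by real orthogonal matrices on the left or right preserves singular values, so the singular values of $U_i$ are precisely the moduli of the diagonal entries of $D_i$. Since $D_r+iD_i=\exp(i\,\mathrm{diag}(\eta_0,\eta_1,\eta_2,\eta_3))$, we have $D_i=\mathrm{diag}(\sin\eta_0,\ldots,\sin\eta_3)$, hence the singular‑value multiset of $U_i$ is $\{\,|\sin\eta_j|:j=0,\dots,3\,\}$.

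Next I would plug $g=0$ into the stated linear relation, obtaining $\eta_0=k_x-k_y+k_z$, $\eta_1=k_x+k_y-k_z$, $\eta_2=-(k_x+k_y+k_z)$, $\eta_3=-k_x+k_y+k_z$. The constraints $\frac{\pi}{4}\ge k_x\ge k_y\ge|k_z|\ge0$ keep each of these four numbers inside $[-\frac{3\pi}{4},\frac{3\pi}{4}]\subset[-\pi,\pi]$, where $|\sin t|=\sin|t|$; combining this with $\sin(-x)=-\sin(x)$ and splitting on the sign of $k_z$, one checks that the multiset $\{|\sin\eta_j|\}$ equals $\{\sin(k_x+k_y+k_z^{+}),\ \sin(k_x+k_y-k_z^{+}),\ \sin(k_x-k_y+k_z^{+}),\ \sin|{-k_x+k_y+k_z^{+}}|\}$ with $k_z^{+}=|k_z|$. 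This is pure bookkeeping, the only mildly delicate point being that when $k_z<0$ the roles of $\eta_0$ and $\eta_3$ are swapped relative to the $k_z\ge 0$ case.

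It then remains to see that the four values are already written in nonincreasing order. Each of the inequalities $\sigma_1\ge\sigma_2$, $\sigma_2\ge\sigma_3$, $\sigma_3\ge\sigma_4$ has the shape $\sin(p+q)-\sin(p-q)=2\cos p\,\sin q\ge0$ for a suitable pair $(p,q)$ built from $k_x,k_y,k_z^{+}$ (for instance $p=k_x+k_y$, $q=k_z^{+}$ for the first), and each such pair satisfies $p\in[0,\frac{\pi}{2}]$ and $q\in[0,\frac{\pi}{4}]$ thanks to $k_x+k_y\le\frac{\pi}{2}$, $k_x\ge k_y$, and $k_y\ge|k_z|$; for $\sigma_3\ge\sigma_4$ one additionally splits on the sign of $-k_x+k_y+k_z^{+}$ to remove the absolute value before applying the identity. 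This sign/ordering verification is the part that needs the most care, though it stays elementary.

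Finally, for $g=\frac{\pi}{2}$ the same relation gives $\eta_j=\frac{\pi}{2}+\eta_j^{(0)}$, where $\eta_j^{(0)}$ denotes the $g=0$ value, so $\cos\eta_j=-\sin\eta_j^{(0)}$ and therefore $|\cos\eta_j|=|\sin\eta_j^{(0)}|$. Since $U_r=Q_LD_rQ_R^{T}$ with $D_r=\mathrm{diag}(\cos\eta_j)$ is again real, its singular values are the $|\cos\eta_j|$, which coincide with the $g=0$ values of $|\sin\eta_j|$, establishing that the same formula holds with $U_i$ replaced by $U_r$. The main obstacle throughout is merely the careful tracking of signs and of which single argument can leave $[0,\frac{\pi}{2}]$; there is no conceptual difficulty once the real‑orthogonality observation is in place.
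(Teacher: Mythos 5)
Your proposal is correct and follows essentially the same route as the paper: identify the singular values of $U_i$ with the moduli of the diagonal entries of $D_i$ (i.e.\ the $\vert\sin\eta_j\vert$), then verify the stated nonincreasing ordering using the constraints $\frac{\pi}{4}\ge k_x\ge k_y\ge\vert k_z\vert\ge 0$. You are somewhat more explicit than the paper on why real orthogonality of $Q_L,Q_R$ reduces the problem to $\vert D_i\vert$ and on the $g=\frac{\pi}{2}$ case, and you use the product-to-sum identity where the paper uses the reflection $\sin A\ge\sin B\Leftrightarrow A\le\pi-B$, but these are cosmetic differences.
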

\begin{proof}
	Firstly, $\vert D_i \vert$ contains all singular values of $U_i$ regardless the ordering. For example, $\vert\sin \eta_2 \vert$ is just $\sin (k_x + k_y + k_z )$ if $k_z \ge 0$. As for the ordering of those singular values:
	
	\noindent (1) $ \sin (k_x + k_y + k^+_z ) \ge \sin (k_x + k_y - k^+_z) \Longleftrightarrow k_x + k_y + k^+_z \le \pi - (k_x + k_y - k^+_z) \Longleftrightarrow k_x + k_y \le \frac{\pi}{2}$;\smallskip
	
	\noindent (2) $\sin (k_x + k_y - k^+_z ) \ge \sin (k_x - k_y + k^+_z) \Longleftrightarrow k_y \ge k_z^+$;\smallskip
	
	\noindent (3) $ \sin (k_x - k_y + k^+_z) \ge \sin \vert -k_x + k_y + k^+_z \vert \Longleftrightarrow k_x - k_y + k^+_z \ge \vert -k_x + k_y + k^+_z \vert$ \\
	If $k_y + k^+_z \ge k_x$, the inequality is equivalent to $k_x \ge k_y$; \\
	If $k_y + k^+_z \le k_x$, the inequality is equivalent to $k_z^+ \ge 0$; 
	
	Therefore, the singular values of $U_i$ is just as shown in the proposition. The case for $U_r$ can be computed similarly.
\end{proof}

We define $k_t(U) = (k_x + k_y + k_z^+)(U)$, $R(\beta) = \min (\beta, \pi-\beta)$, $\theta=\theta_x$ (of $D_a$). We assume $0< \theta < \frac{\pi}{4}$.
\begin{prop}[Triangle inequality]\label{prop:datrianglecos}
	For $U \in SU(4)$, 
	if $R(k_t(U)) \le \frac{\pi}{4}$,  $\vert \cos k_t(UD_a) \vert \le \cos (R(k_t(U))-\theta)$.
\end{prop}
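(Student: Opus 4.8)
The plan is to pass to the magic (Bell) basis, where the Cartan data of a two‑qubit gate is encoded in the singular spectrum of a real matrix, and then obtain the claimed inequality from a Weyl‑type perturbation bound for the smallest singular value. Write $V := UD_a$. Conjugating by $M$, and using $D_a = e^{i\theta X_0X_1} = \cos\theta\,I + i\sin\theta\,X_0X_1$ together with the standard magic‑basis identity $\Lambda := M^\dagger(X_0X_1)M = \mathrm{diag}(1,1,-1,-1)$ (a real sign matrix, consistent with the $k_x$‑column of the coefficient matrix above), I would write $M^\dagger U M = U_r + iU_i$ and expand
\begin{equation*}
M^\dagger V M \;=\; (U_r + iU_i)(\cos\theta\,I + i\sin\theta\,\Lambda) \;=\; (\cos\theta\,U_r - \sin\theta\,U_i\Lambda) \;+\; i(\cos\theta\,U_i + \sin\theta\,U_r\Lambda),
\end{equation*}
so that $V_r = \cos\theta\,U_r - \sin\theta\,U_i\Lambda$ and $V_i = \cos\theta\,U_i + \sin\theta\,U_r\Lambda$.

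Next I would express $|\cos k_t(W)|$ as a singular value. From the KAK form $M^\dagger W M = Q_L\,\mathrm{diag}(e^{i\eta_j})\,Q_R^T$ one has $W_r = Q_L\,\mathrm{diag}(\cos\eta_j)\,Q_R^T$ and $W_i = Q_L\,\mathrm{diag}(\sin\eta_j)\,Q_R^T$, so $W_r$ and $W_i$ share singular vectors and, sorted in opposite orders, their singular values are complementary (the $j$‑th largest of $W_r$ and the $j$‑th smallest of $W_i$ have squares summing to $1$). Combined with Proposition~\ref{prop:sigma}, $\sin k_t(W)$ is the largest singular value of $W_i$ when the global phase of $W$ is $0$, and of $W_r$ when it is $\pi/2$; hence $|\cos k_t(W)| = \sqrt{1-\sin^2 k_t(W)}$ is the smallest singular value of $W_r$, respectively of $W_i$. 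Applying this with $W = V$, and using the elementary bound $\sigma_{\min}(A+B) \le \sigma_{\min}(A) + \sigma_{\max}(B)$ (evaluate $A+B$ on a right singular vector of $A$ realizing $\sigma_{\min}(A)$) together with the fact that the orthogonal $\Lambda$ does not change singular values, I get
\begin{equation*}
|\cos k_t(V)| \;\le\; \cos\theta\,\sigma_{\min}(U_r) + \sin\theta\,\sigma_{\max}(U_i) \qquad\text{or}\qquad |\cos k_t(V)| \;\le\; \cos\theta\,\sigma_{\min}(U_i) + \sin\theta\,\sigma_{\max}(U_r),
\end{equation*}
according to the global phase of $V$.

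It then remains to bound the right‑hand side using Proposition~\ref{prop:sigma} and the complementarity above. When the global phase of $U$ matches that of $V$, the extremal singular values of $U$ entering are $|\cos k_t(U)| = \cos R(k_t(U))$ (the last equality because $k_t(U) \le 3\pi/4$) and $\sin k_t(U) = \sin R(k_t(U))$, so the bound becomes exactly $\cos\theta\cos R(k_t(U)) + \sin\theta\sin R(k_t(U)) = \cos\!\big(R(k_t(U))-\theta\big)$, as required. When the global phases disagree, the extremal singular values that appear are instead $\sin\mu$ and $\cos\mu$ with $\mu := \big|\,{-k_x}+k_y+k_z^+\,\big|(U)$, so the bound reads $\sin(\mu+\theta)$; I would then close the argument with the elementary Cartan‑coordinate facts $\mu \le \pi/4$ and $\mu \le k_x+k_y+k_z^+ = k_t(U)$ — hence $\mu \le R(k_t(U)) \le \pi/4$ — which, together with $\theta < \pi/4$, give $\mu + \theta + |R(k_t(U))-\theta| \le \pi/2$ and therefore $\sin(\mu+\theta) \le \cos\!\big(R(k_t(U))-\theta\big)$. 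This disposes of all four combinations of global phases of $U$ and $UD_a$.

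The hard part will be this discrete global‑phase bookkeeping. Proposition~\ref{prop:sigma} only identifies $\sin k_t$ with the top singular value of $U_i$ or of $U_r$ depending on the $\{0,\pi/2\}$‑valued phase, so the proof forks into the four combinations of the phases of $U$ and $UD_a$, and in the two ``mismatched'' cases the clean cosine‑addition identity degrades to the inequality $\sin(\mu+\theta) \le \cos(R(k_t(U))-\theta)$, which must be verified by the geometric estimates $\mu \le k_t(U)$, $\mu \le \pi/4$ and $R(k_t(U)) \le \pi/4$. Everything else is the single Weyl inequality for $\sigma_{\min}$ plus the magic‑basis identities already recorded in the appendix.
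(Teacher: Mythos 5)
Your proposal is correct and follows essentially the same route as the paper's proof: pass to the magic basis, identify $\vert\cos k_t\vert$ with the smallest singular value of $U_r$ or $U_i$ via Proposition~\ref{prop:sigma}, apply the Weyl bound $\sigma_{\min}(A+B)\le\sigma_{\min}(A)+\sigma_{\max}(B)$, and split into the four global-phase cases. The only (harmless) deviation is in the mismatched-phase cases, where you pair $\sigma_{\min}$ with its complementary $\sigma_{\max}$ to get $\sin(\mu+\theta)$ with $\mu=\vert -k_x+k_y+k_z^+\vert$, whereas the paper uses the other Weyl pairing to get $\sin(R(k_t(U))+\theta)$; both close via $R(k_t(U))\le\frac{\pi}{4}$.
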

\noindent\textit{Proof}.
Let $U_{1} = U D_a$, then
	$U_{1r} + i U_{1i} = M^\dagger U_{1} M = M^\dagger  U M e^{i\theta Z_0} \nonumber \\
	= (\cos\theta U_{r} - \sin\theta U_{i}Z_0) + i (\cos\theta U_{i} + \sin\theta U_{r}Z_0)$.
Since $Z_0 \in SO(4)$, the singular values of $U_{i}Z_0$ should be as the same as $U_{i}$. Then according to Weyl's Inequality \cite{franklin2012matrix},
\begingroup \small
\begin{align}
	\sigma_4(U_{1r}) \le \cos\theta \sigma_4(U_{r}) + \sin\theta \sigma_1(U_{i}),\label{equ:u1r} \\
	\sigma_4(U_{1i}) \le \sin\theta \sigma_4(U_{i}) + \cos\theta \sigma_1(U_{r}).\label{equ:u1i} \\
	\sigma_4(U_{1r}) \le \cos\theta \sigma_1(U_{r}) + \sin\theta \sigma_4(U_{i}),\label{equ:u1r1} \\
	\sigma_4(U_{1i}) \le \sin\theta \sigma_1(U_{r}) + \cos\theta \sigma_4(U_{i}),\label{equ:u1i1}
\end{align}
\endgroup
Now, we do case by case study regarding the global phase $g_{U_1}$ and $g_U$ of $U_1$ and $U$, respectively. 

(i) If $g_{U_1}=0$ and $g_{U}=0$, then by Proposition~\ref{prop:sigma}, Equation \ref{equ:u1r} is equivalent to
	$\vert\cos (k_t(U_1))\vert \le \cos\theta \vert\cos k_t(U) \vert + \sin\theta \sin (k_t(U)), 
	 = \cos (R(k_t(U))-\theta)$. 
This completes the proof.

(ii) If $g_{U_1}=\frac{\pi}{2}$ and $g_{U}=0$, then by Proposition~\ref{prop:sigma}, Equation \ref{equ:u1i} becomes
	$\vert\cos (k_t(U_1))\vert \le \cos\theta sin k_t(U) + \sin\theta \vert\cos (k_t(U))\vert \\
	= \sin(R(k_t(U))+\theta) 
	\le  \cos(R(k_t(U))-\theta)$.
This completes the proof. The second inequality is due to $R(k_t(U)) \le \frac{\pi}{4}$. 

Other cases, such as $(g_{U_1}, g_{U})=(0, \frac{\pi}{2})$ and $(g_{U_1}, g_{U})=(\frac{\pi}{2}, \frac{\pi}{2})$ can be discussed similarly.

Now we start to prove Theorem \ref{prop:dacost}.
\begin{proof}
	The theorem can be proved by contradiction:
	
	Assume $sin\, k_t(U) > \sin N\theta$. This means $R(k_t(U)) > N\theta$. Now consider $U_1 = U (a_1\otimes b_1) D_a^\dagger (c_1 \otimes d_1)$, where $a_1, b_1, c_1, d_1 \in SU(2)$. Note that single-qubit gates do not affect the Cartan coordinate. Then we discuss $R(k_t(U_1))$ case by case:
	
	(i) if $R(k_t(U)) \le \frac{\pi}{4}$, $\vert \cos k_t(U_1) \vert \le \cos (R(k_t(U)) - \theta)$, according to Proposition~\ref{prop:datrianglecos}.
	This indicates $R(k_t(U_1)) \ge R(k_t(U)) - \theta$. 
	
	(ii) If $R(k_t(U)) > \frac{\pi}{4}$ and $g_{U_1} = g_U$, then we still have $R(k_t(U_1)) \ge R(k_t(U)) - \theta$, following the proof of Proposition \ref{prop:datrianglecos};
	
	(iii) If $R(k_t(U)) > \frac{\pi}{4}$ and $g_{U_1} \ne g_U$, following the proof of Proposition \ref{prop:datrianglecos} (now using Equation \ref{equ:u1i1} for the case (ii) in the proof of Proposition \ref{prop:datrianglecos}), we have 
		$\vert\cos (k_t(U_1))\vert \le \cos\theta \sin \vert -k_x + k_y + k_z^+ \vert 
		+ \sin\theta \cos \vert -k_x + k_y + k_z^+ \vert 
		= \sin \vert -k_x + k_y + k_z^+ \vert + \theta 
		\le \sin (\frac{\pi}{4}+\theta) = \cos(\frac{\pi}{4}-\theta)$.
	This leads to $R(k_t(U_1)) \ge \frac{\pi}{4} - \theta$.
	
	Overall, we have 
	$R(k_t(U_1)) \ge \min(\frac{\pi}{4}, R(k_t(U))) - \theta > (N-1)\theta.$
	Then for $U_{N} = U_{N-1} (a_{N-1}\otimes b_{N-1}) D_a^\dagger (c_{N-1} \otimes d_{N-1})$, with recursion, 
	$R(k_t(U_N))  >  (N-N)\theta = 0$.
	
	However, by adjusting single-qubit gates $\{a_i, b_i, c_i, d_i\}$, the generated set of $U_N$ should contain the identity gate. In this case, $0 =  R(k_t(I)) = R(k_t(U_N)) > 0$, causing the contradiction. Thus, $\sin k_t(U) \le \sin N\theta$.
\end{proof}

\bibliographystyle{unsrt}
\bibliography{sample-base}

\begin{thebibliography}{10}

\bibitem{Arute2019QuantumSU}
Frank Arute, Kunal Arya, Ryan Babbush, Dave Bacon, Joseph~C. Bardin, Rami
  Barends, Rupak Biswas, Sergio Boixo, Fernando G. S.~L. Brand{\~a}o, David~A.
  Buell, Brian Burkett, Yu~Chen, Zijun Chen, Benjamin Chiaro, Roberto Collins,
  William Courtney, Andrew Dunsworth, Edward Farhi, Brooks Foxen, Austin~G.
  Fowler, Craig Gidney, Marissa Giustina, Rob Graff, Keith Guerin, S.~Habegger,
  Matthew~P. Harrigan, Michael~J. Hartmann, Alan~K. Ho, Markus Hoffmann, Trent
  Huang, T.~Humble, Sergei~V. Isakov, Evan Jeffrey, Zhang Jiang, Dvir Kafri,
  Kostyantyn Kechedzhi, Julian Kelly, Paul~V. Klimov, Sergey Knysh,
  Alexander~N. Korotkov, Fedor Kostritsa, David Landhuis, Mike Lindmark, Erik
  Lucero, Dmitry~I. Lyakh, Salvatore Mandr{\`a}, Jarrod~R. McClean, Matthew~J.
  McEwen, Anthony Megrant, Xiao Mi, Kristel Michielsen, Masoud Mohseni, Josh
  Mutus, Ofer Naaman, Matthew Neeley, Charles~J. Neill, Murphy~Yuezhen Niu,
  Eric~P. Ostby, Andre Petukhov, John~C. Platt, Chris Quintana, Eleanor~Gilbert
  Rieffel, Pedram Roushan, Nicholas~C. Rubin, Daniel~Thomas Sank, Kevin~J.
  Satzinger, Vadim~N. Smelyanskiy, Kevin~J Sung, Matthew~D Trevithick, Amit
  Vainsencher, Benjamin Villalonga, Theodore White, Z.~Yao, Ping Yeh, Adam
  Zalcman, Hartmut Neven, and John~M. Martinis.
\newblock Quantum supremacy using a programmable superconducting processor.
\newblock {\em Nature}, 574:505 -- 510, 2019.

\bibitem{Chen2024benchmarkingtrapped}
Jwo-Sy Chen, Erik Nielsen, Matthew Ebert, Volkan Inlek, Kenneth Wright,
  Vandiver Chaplin, Andrii Maksymov, Eduardo P{\'{a}}ez, Amrit Poudel, Peter
  Maunz, and John Gamble.
\newblock Benchmarking a trapped-ion quantum computer with 30 qubits.
\newblock {\em {Quantum}}, 8:1516, November 2024.

\bibitem{McKinney2022CoDesignedAF}
Evan McKinney, Mingkang Xia, Chao Zhou, Pinlei Lu, Michael Hatridge, and
  Alex~K. Jones.
\newblock Co-designed architectures for modular superconducting quantum
  computers.
\newblock {\em 2023 IEEE International Symposium on High-Performance Computer
  Architecture (HPCA)}, pages 759--772, 2022.

\bibitem{lao2021designing}
Lingling Lao, Prakash Murali, Margaret Martonosi, and Dan Browne.
\newblock Designing calibration and expressivity-efficient instruction sets for
  quantum computing.
\newblock In {\em 2021 ACM/IEEE 48th Annual International Symposium on Computer
  Architecture (ISCA)}, pages 846--859. IEEE, 2021.

\bibitem{lin2022let}
Sophia~Fuhui Lin, Sara Sussman, Casey Duckering, Pranav~S Mundada, Jonathan~M
  Baker, Rohan~S Kumar, Andrew~A Houck, and Frederic~T Chong.
\newblock Let each quantum bit choose its basis gates.
\newblock In {\em 2022 55th IEEE/ACM International Symposium on
  Microarchitecture (MICRO)}, pages 1042--1058. IEEE, 2022.

\bibitem{peterson2022optimal}
Eric~C Peterson, Lev~S Bishop, and Ali Javadi-Abhari.
\newblock Optimal synthesis into fixed xx interactions.
\newblock {\em Quantum}, 6:696, 2022.

\bibitem{Chen2023OneGS}
Jianxin Chen, Dawei Ding, Weiyuan Gong, Cupjin Huang, and Qi~Ye.
\newblock One gate scheme to rule them all: Introducing a complex yet reduced
  instruction set for quantum computing.
\newblock {\em Proceedings of the 29th ACM International Conference on
  Architectural Support for Programming Languages and Operating Systems, Volume
  2}, 2023.

\bibitem{BQSKIT}
Ed~Younis, Costin~C Iancu, Wim Lavrijsen, Marc Davis, Ethan Smith, and USDOE.
\newblock Berkeley quantum synthesis toolkit (bqskit) v1, 4 2021.

\bibitem{qiskit2024}
Ali Javadi-Abhari, Matthew Treinish, Kevin Krsulich, Christopher~J. Wood, Jake
  Lishman, Julien Gacon, Simon Martiel, Paul~D. Nation, Lev~S. Bishop,
  Andrew~W. Cross, Blake~R. Johnson, and Jay~M. Gambetta.
\newblock Quantum computing with {Q}iskit, 2024.

\bibitem{nielsen2010quantum}
Michael~A Nielsen and Isaac~L Chuang.
\newblock {\em Quantum computation and quantum information}.
\newblock Cambridge university press, 2010.

\bibitem{khaneja2000cartan}
Navin Khaneja and Steffen Glaser.
\newblock Cartan decomposition of su (2\^{} n), constructive controllability of
  spin systems and universal quantum computing.
\newblock {\em arXiv preprint quant-ph/0010100}, 2000.

\bibitem{shende2005synthesis}
Vivek~V Shende, Stephen~S Bullock, and Igor~L Markov.
\newblock Synthesis of quantum logic circuits.
\newblock In {\em Proceedings of the 2005 Asia and South Pacific Design
  Automation Conference}, pages 272--275, 2005.

\bibitem{shende2003minimal}
Vivek~V Shende, Igor~L Markov, and Stephen~S Bullock.
\newblock Minimal universal two-qubit cnot-based circuits.
\newblock {\em arXiv preprint quant-ph/0308033}, 2003.

\bibitem{dawson2005solovay}
Christopher~M Dawson and Michael~A Nielsen.
\newblock The solovay-kitaev algorithm.
\newblock {\em arXiv preprint quant-ph/0505030}, 2005.

\bibitem{barenco1995elementary}
Adriano Barenco, Charles~H Bennett, Richard Cleve, David~P DiVincenzo, Norman
  Margolus, Peter Shor, Tycho Sleator, John~A Smolin, and Harald Weinfurter.
\newblock Elementary gates for quantum computation.
\newblock {\em Physical review A}, 52(5):3457, 1995.

\bibitem{Bergholm2004QuantumCW}
Ville Bergholm, Juha~J. Vartiainen, Mikko Mottonen, and Martti~M. Salomaa.
\newblock Quantum circuits with uniformly controlled one-qubit gates.
\newblock {\em Physical Review A}, 71:52330, 2004.

\bibitem{de2016block}
Alexis De~Vos and Stijn De~Baerdemacker.
\newblock Block-zxz synthesis of an arbitrary quantum circuit.
\newblock {\em Physical Review A}, 94(5):052317, 2016.

\bibitem{Vatan2003OptimalQC}
Farrokh Vatan and Colin~P. Williams.
\newblock Optimal quantum circuits for general two-qubit gates.
\newblock {\em Physical Review A}, 69, 2003.

\bibitem{Vidal2003UniversalQC}
Guifr{\'e} Vidal and Christopher~M. Dawson.
\newblock Universal quantum circuit for two-qubit transformations with three
  controlled-not gates.
\newblock {\em Physical Review A}, 69:010301, 2003.

\bibitem{Aho2003CompilingQC}
Alfred~V. Aho and Krysta~Marie Svore.
\newblock Compiling quantum circuits using the palindrome transform.
\newblock {\em arXiv: Quantum Physics}, 2003.

\bibitem{Cybenko2001ReducingQC}
George~V. Cybenko.
\newblock Reducing quantum computations to elementary unitary operations.
\newblock {\em Comput. Sci. Eng.}, 3:27--32, 2001.

\bibitem{Vartiainen2003EfficientDO}
Juha~J. Vartiainen, Mikko M{\"o}tt{\"o}nen, and Martti~M. Salomaa.
\newblock Efficient decomposition of quantum gates.
\newblock {\em Physical review letters}, 92 17:177902, 2003.

\bibitem{Mttnen2004QuantumCF}
Mikko M{\"o}tt{\"o}nen, Juha~J. Vartiainen, Ville Bergholm, and Martti~M.
  Salomaa.
\newblock Quantum circuits for general multiqubit gates.
\newblock {\em Physical review letters}, 93 13:130502, 2004.

\bibitem{smith2016practical}
Robert~S. Smith, Michael~J. Curtis, and William~J. Zeng.
\newblock A practical quantum instruction set architecture, 2016.

\bibitem{peterson2020fixed}
Eric~C Peterson, Gavin~E Crooks, and Robert~S Smith.
\newblock Fixed-depth two-qubit circuits and the monodromy polytope.
\newblock {\em Quantum}, 4:247, 2020.

\bibitem{Arute2020QuantumAO}
Frank Arute, Kunal Arya, Ryan Babbush, Dave Bacon, Joseph~C. Bardin, Rami
  Barends, Sergio Boixo, Mick Broughton, Bob~B Buckley, David~A. Buell, Brian
  Burkett, Nicholas Bushnell, Yu~Chen, Zijun Chen, Benjamin Chiaro, Roberto
  Collins, William Courtney, Sean Demura, Andrew Dunsworth, Edward Farhi,
  Austin~G. Fowler, Brooks Foxen, Craig Gidney, Marissa Giustina, Rob Graff,
  S.~Habegger, Matthew~P. Harrigan, Alan~K. Ho, Sabrina Hong, Trent Huang,
  L.~B. Ioffe, Sergei~V. Isakov, Evan Jeffrey, Zhang Jiang, Cody Jones, Dvir
  Kafri, Kostyantyn Kechedzhi, Julian Kelly, Seon Kim, Paul~V. Klimov,
  Alexander~N. Korotkov, Fedor Kostritsa, David Landhuis, Pavel Laptev, Mike
  Lindmark, Martin Leib, Erik Lucero, Orion Martin, John~M. Martinis, Jarrod~R.
  McClean, Matthew~J. McEwen, Anthony Megrant, Xiao Mi, Masoud Mohseni,
  Wojciech Mruczkiewicz, Josh Mutus, Ofer Naaman, Matthew Neeley, Charles~J.
  Neill, Florian Neukart, Hartmut Neven, Murphy~Yuezhen Niu, Thomas~E.
  O’Brien, Bryan O’Gorman, Eric~P. Ostby, Andre Petukhov, Harald Putterman,
  Chris Quintana, Pedram Roushan, Nicholas~C. Rubin, Daniel~Thomas Sank,
  Kevin~J. Satzinger, Andrea Skolik, Vadim~N. Smelyanskiy, Doug Strain, Michael
  Streif, Kevin~J Sung, Marco Szalay, Amit Vainsencher, Theodore White, Z.~Yao,
  Ping Yeh, Adam Zalcman, and Leo Zhou.
\newblock Quantum approximate optimization of non-planar graph problems on a
  planar superconducting processor.
\newblock {\em Nature Physics}, 17:332 -- 336, 2020.

\bibitem{Shor1994AlgorithmsFQ}
Peter~W. Shor.
\newblock Algorithms for quantum computation: discrete logarithms and
  factoring.
\newblock {\em Proceedings 35th Annual Symposium on Foundations of Computer
  Science}, pages 124--134, 1994.

\bibitem{RuizPerez2014QuantumAW}
Lidia Ruiz-Perez and Juan~Carlos Garc{\'i}a-Escart{\'i}n.
\newblock Quantum arithmetic with the quantum fourier transform.
\newblock {\em Quantum Information Processing}, 16, 2014.

\bibitem{xie2019using}
Huiqin Xie and Li~Yang.
\newblock Using bernstein--vazirani algorithm to attack block ciphers.
\newblock {\em Designs, Codes and Cryptography}, 87:1161--1182, 2019.

\bibitem{Fedorov2021UnitarySC}
Dmitry~A. Fedorov, Yuri Alexeev, Stephen~K. Gray, and Matthew Otten.
\newblock Unitary selective coupled-cluster method.
\newblock {\em Quantum}, 6:703, 2021.

\bibitem{fowler2012surface}
Austin~G Fowler, Matteo Mariantoni, John~M Martinis, and Andrew~N Cleland.
\newblock Surface codes: Towards practical large-scale quantum computation.
\newblock {\em Physical Review A—Atomic, Molecular, and Optical Physics},
  86(3):032324, 2012.

\bibitem{Grover1996AFQ}
Lov~K. Grover.
\newblock A fast quantum mechanical algorithm for database search.
\newblock In {\em Symposium on the Theory of Computing}, 1996.

\bibitem{Iten2015QuantumCF}
Raban Iten, Roger Colbeck, Ivan Kukuljan, Jonathan~P. Home, and Matthias
  Christandl.
\newblock Quantum circuits for isometries.
\newblock {\em Physical Review A}, 93:032318, 2015.

\bibitem{Xie2022SuppressingZC}
Lei Xie, Jidong Zhai, Zhenxing Zhang, Jonathan Allcock, Shengyu Zhang, and
  Yicong Zheng.
\newblock Suppressing zz crosstalk of quantum computers through pulse and
  scheduling co-optimization.
\newblock {\em Proceedings of the 27th ACM International Conference on
  Architectural Support for Programming Languages and Operating Systems}, 2022.

\bibitem{Knill2007RandomizedBO}
Emanuel Knill, Dietrich Leibfried, R.~Reichle, Joseph~W. Britton, R.~Bradford
  Blakestad, John~D. Jost, Christopher Langer, Roee Ozeri, S.~Seidelin, and
  David~J. Wineland.
\newblock Randomized benchmarking of quantum gates.
\newblock {\em Physical Review A}, 77:012307, 2007.

\bibitem{Tucci1998ARQ}
Robert~R. Tucci.
\newblock A rudimentary quantum compiler.
\newblock {\em arXiv: Quantum Physics}, 1998.

\bibitem{Sivarajah2020tketAR}
Seyon Sivarajah, Silas Dilkes, Alexander Cowtan, Will Simmons, Alec Edgington,
  and Ross Duncan.
\newblock t|ket⟩: a retargetable compiler for nisq devices.
\newblock {\em Quantum Science \& Technology}, 6, 2020.

\bibitem{Smith2020AnOI}
Robert~S. Smith, Eric~C. Peterson, Mark~G. Skilbeck, and Erik~J. Davis.
\newblock An open-source, industrial-strength optimizing compiler for quantum
  programs.
\newblock {\em Quantum Science \& Technology}, 5, 2020.

\bibitem{Amy2019staqAFQ}
Matthew Amy and Vlad Gheorghiu.
\newblock staq—a full-stack quantum processing toolkit.
\newblock {\em Quantum Science \& Technology}, 5, 2019.

\bibitem{Khammassi2020OpenQLA}
Nader Khammassi, Imran Ashraf, J.~van Someren, Răzvan Nane, A.~M. Krol, M.~A.
  Rol, Lingling Lao, Koen Bertels, and Carmen~Garcia Almudever.
\newblock Openql : A portable quantum programming framework for quantum
  accelerators.
\newblock {\em ArXiv}, abs/2005.13283, 2020.

\bibitem{cirq}
Cirq Developers.
\newblock Cirq, May 2024.

\bibitem{Nguyen2022ProgrammableHI}
Long~B. Nguyen, Yosep Kim, Akel Hashim, Noah Goss, Brett Marinelli, Burhanuddin
  Bhandari, Debmalya Das, Ravi~K. Naik, John~Mark Kreikebaum, Andrew~N. Jordan,
  David~I. Santiago, and Irfan Siddiqi.
\newblock Programmable heisenberg interactions between floquet qubits.
\newblock {\em Nature Physics}, 2022.

\bibitem{foxen2020demonstrating}
Brooks Foxen, Charles Neill, Andrew Dunsworth, Pedram Roushan, Ben Chiaro,
  Anthony Megrant, Julian Kelly, Zijun Chen, Kevin Satzinger, Rami Barends,
  et~al.
\newblock Demonstrating a continuous set of two-qubit gates for near-term
  quantum algorithms.
\newblock {\em Physical Review Letters}, 125(12):120504, 2020.

\bibitem{Abrams2020ImplementationOX}
Deanna~M. Abrams, Nicolas Didier, Blake~R. Johnson, Marcus~P. da~Silva, and
  Colm~A. Ryan.
\newblock Implementation of xy entangling gates with a single calibrated pulse.
\newblock {\em Nature Electronics}, 3:744 -- 750, 2020.

\bibitem{Prez2021ErrorDivisibleTG}
David~Rodr{\'i}guez P{\'e}rez, Paul Varosy, Ziqian Li, Tanay Roy, Eliot Kapit,
  and David~I. Schuster.
\newblock Error-divisible two-qubit gates.
\newblock {\em Physical Review Applied}, 2021.

\bibitem{Evered2023HighfidelityPE}
Simon~J. Evered, Dolev Bluvstein, Marcin Kalinowski, S~Ebadi, Tom Manovitz,
  Hengyun Zhou, Sophie~H. Li, Alexandra~A. Geim, Tout~T. Wang, Nishad Maskara,
  H~Levine, G.~Semeghini, Markus Greiner, Vladan Vuleti{\'c}, and Mikhail~D.
  Lukin.
\newblock High-fidelity parallel entangling gates on a neutral-atom quantum
  computer.
\newblock {\em Nature}, 622:268 -- 272, 2023.

\bibitem{tucci2005introduction}
Robert~R Tucci.
\newblock An introduction to cartan's kak decomposition for qc programmers.
\newblock {\em arXiv preprint quant-ph/0507171}, 2005.

\bibitem{franklin2012matrix}
Joel~N Franklin.
\newblock {\em Matrix theory}.
\newblock Courier Corporation, 2012.

\end{thebibliography}

\end{document}